\newtheorem{theorem}{Theorem}[section]
\newtheorem{lemma}[theorem]{Lemma}
\newtheorem{proposition}[theorem]{Proposition}
\theoremstyle{definition}
\newtheorem{remark}[theorem]{Remark}
\newtheorem{example}[theorem]{Example}
\newcommand{\can}{\overline{\phantom{x}}}
\renewcommand{\bar}{\overline}
\newcommand{\cN}{\mathcal{N}}
\newcommand{\cC}{\mathcal{C}}
\newcommand{\cO}{\mathcal{O}}
\newcommand{\x}{\times}
\newcommand{\End}{\mathrm{End}}
\newcommand{\Mat}{\mathrm{Mat}}
\newcommand{\Cay}{\mathrm{Cay}}
\newcommand{\Aut}{\mathrm{Aut}}
\newcommand{\Q}{\mathbb{Q}}
\newcommand{\Z}{\mathbb{Z}}
\newcommand{\N}{\mathbb{N}}
\newcommand{\R}{\mathbb{R}}
\newcommand{\C}{\mathbb{C}}
\title[Space-time block codes from nonassociative division algebras]
      {Space-time block codes from nonassociative division algebras}
\author[Susanne Pumpl\"un and Thomas Unger]{}
\subjclass[2000]{11T71, 68P30, 17A35}
 \keywords{Space-time block codes, nonassociative division algebras, full diversity}
\date{09.06.2011}
\begin{document}
\maketitle

\centerline{\scshape Susanne Pumpl\"un}
\medskip
{\footnotesize
 \centerline{School of Mathematical Sciences}
  \centerline{University of Nottingham}
   \centerline{University Park,
  Nottingham NG7 2RD, United Kingdom}
   }
   
\bigskip

\centerline{\scshape Thomas Unger}
\medskip
{\footnotesize
 \centerline{School of Mathematical Sciences}
  \centerline{University College Dublin}
   \centerline{Belfield, Dublin 4, Ireland}
   }

\begin{abstract}
Associative division algebras are a rich source of fully diverse space-time block codes (STBCs). In this paper the systematic construction of fully diverse  STBCs from nonassociative algebras is discussed.  As examples, families of fully diverse $2\times 2$, $2\times 4$ multiblock   and $4\x 4$ STBCs
are designed, employing nonassociative quaternion division algebras. 
\end{abstract}

\section{Introduction}
Space-time coding is used for reliable high rate transmission
over wireless digital channels with multiple antennas at both the transmitter and receiver ends.
From a mathematical point of view,  a space-time block code (STBC) consists of a
family of  matrices with complex entries (the codebook)  that satisfies a number of properties which determine how well the code performs.

The first aim is to find \emph{fully diverse} codebooks, where the difference of any two 
code words   has full rank. Once a fully diverse codebook is found it is then further optimized to satisfy additional design criteria (see Section~\ref{sec6}).

Using central simple associative division algebras to build
space-time block codes allows for a systematic code design
(see for instance \cite{O-R-B-V}, \cite{E-S-K}, \cite{S-R-S}, \cite{H-L-R-V}, \cite{Be-Og1}, \cite{Be-Og2}, \cite{Be-Og3} and the excellent survey \cite{Seth}).

Most of the existing  codes  are built from cyclic division algebras
over $F=\mathbb{Q}(i)$ or $F= \mathbb{Q}(\zeta_3)$
with $\zeta_3=e^{2\pi i /3}$ a third root of unity. These fields are used for
 the transmission of QAM or HEX constellations, respectively.

There are two  ways to embed an associative division algebra into a matrix algebra in order to obtain a
 codebook: the left regular
representation of the algebra and the representation over some maximal subfield. For instance, a real $4\times 4$ orthogonal design is obtained
by the left regular representation of the real quaternions $\mathbb{H}=(-1,-1)_\mathbb{R}$
(see for instance \cite[p.~1458]{T-J-C1}), whereas the  Alamouti code \cite{Al} uses the representation of $\mathbb{H}$ over its maximal
subfield $\mathbb{C}$.

In \cite[p.~2608]{S-R-S}, the authors note that ``the Alamouti code is the only rate-one STBC
which is full rank over any finite subset of $\mathbb{C}$, which is due to the fact that the set of quaternions $\mathbb{H}$
is the only division algebra which has the entire complexes as its maximal subfield.''
There are however other, \emph{nonassociative}, division algebras over $\mathbb{R}$ of dimension 4 which contain $\mathbb{C}$
as a subfield and which yield  STBCs that are full rank over any finite subset of $\mathbb{C}$:
the  \emph{nonassociative quaternion division algebras} over $\mathbb{R}$ which were classified in \cite{Al-H-K} and which we will employ here.

In this paper we show how  nonassociative division algebras can be
used  to systematically construct fully diverse linear STBCs.
If a nonassociative algebra $A$ behaves well enough,
 one can also obtain fully diverse families of matrices using subfields of $A$.
We  use 4-dimensional nonassociative quaternion division algebras 
to construct new
examples of  fully diverse $2\times 2$ and $4\x 4$ space-time block codes and of $2\times 4$ multiblock space-time  codes (cf. \cite{L-M-M}). We also investigate when these codes satisfy the non-vanishing determinant property.

The paper is organized as follows: in Sections~\ref{sec2} and \ref{sec3}
we present nonassociative algebras and the Cayley-Dickson doubling process, respectively. In Section~\ref{sec4} we define nonassociative quaternion division algebras (constructed via a
generalized Cayley-Dickson doubling process). In Section~\ref{sec5} we explain the general framework for obtaining fully diverse STBCs from nonassociative division algebras. 
In Section~\ref{sec6} we list the design criteria used in the construction of STBCs.
In Section~\ref{sec7} we look in more detail at the construction of fully diverse $2\x 2$ STBCs from nonassociative quaternion algebras. We also discuss the non-vanishing determinant  and 
information lossless  properties. This is followed by many examples. 
In Sections~\ref{sec8} and \ref{sec9} we discuss the construction of fully diverse $2\x 4$ multiblock codes and $4\x 4$ codes from nonassociative quaternion algebras, respectively. In Appendix~\ref{AppA} we collect results from algebraic number theory that are needed in this paper.

\section{Nonassociative algebras}\label{sec2}

Let $F$ be a field and let $A$ be a finite-dimensional $F$-vector space.
We call $A$ an \emph{algebra} over $F$ if there exists an
$F$-bilinear map $A\times A\to A$, $(x,y) \mapsto x \cdot y$ (also denoted simply by juxtaposition $xy$),
called  \emph{multiplication}, on $A$.  This definition does not imply that the algebra is associative; we only have
$c(xy)=(cx)y=x(cy)$ for all $c\in F$, $x,y\in A$. Hence we also call such an algebra
a \emph{nonassociative algebra}, in the sense that it is not necessarily associative.
A (nonassociative) algebra $A$ is called \emph{unital} if there is
an element in $A$ (which can be shown to be uniquely determined), denoted by 1, such that $1x=x1=x$ for all $x\in A$.
We will only consider unital nonassociative algebras.

For a nonassociative $F$-algebra $A$, associativity in $A$ is measured by the {\it
associator} 
\[[x, y, z]  = (xy) z - x (yz).\]
The \emph{nucleus} of $A$ is
defined as
\[\cN(A)  = \{ x \in A \, \vert \, [x, A, A] = [A, x, A] = [A,
A, x] = 0 \}.\]
The nucleus is an associative subalgebra of $A$ (it may be
zero), and $x(yz) = (xy) z$ whenever one of the elements $x, y, z$ is in
$\cN(A)$. In other words, the nucleus of the algebra $A$ contains all the elements of $A$
which associate with every other two elements in $A$. Suppose $K\subset A$ is a subfield of $A$.
An $F$-algebra $A$ is \emph{$K$-associative} if $K$ is contained in the nucleus $\cN(A)$.
The \emph{left nucleus} of $A$ is defined as
$$\cN_\ell(A) = \{ x \in A \, \vert \, [x, A, A]  = 0 \},$$
 the \emph{middle nucleus} of $A$ is
defined as
$$\cN_m(A) = \{ x \in A \, \vert \, [A, x, A]  = 0 \}$$
 and  the \emph{right nucleus} of $A$ is
defined as
$$\cN_r(A) = \{ x \in A \, \vert \, [A,A, x]  = 0 \}.$$
Their intersection is the nucleus $\cN(A)$.

 A nonassociative algebra $A$ is called a \emph{division algebra} if for any $a\in A$, $a\not=0$,
the left multiplication  with $a$, $\lambda_a(x)=ax$,  and the right multiplication with $a$, $\rho_a(x)=xa$, are bijective. The algebra
$A$ is a division algebra if and only if $A$ has no zero divisors  \cite[pp. 15, 16]{Sch}.
Note that if the $F$-algebra $A$ is associative and finite-dimensional as an $F$-vector space, this
definition of division algebra coincides with the usual one for associative algebras.

\section{The Cayley-Dickson doubling process}\label{sec3}

The Cayley-Dickson doubling process is a well-known way to construct a new
algebra with involution from a given algebra with involution.
It can be motivated by the observation that the complex numbers can be viewed as pairs of real numbers with componentwise
addition  and a suitably defined multiplication:

\begin{example}
We define a multiplication on  $\mathbb{R}\x \mathbb{R}$ via
$$(u,v)(u',v'):=(uu' -v'v,uv'+u'v),$$
for $u,v,u',v'\in \mathbb{R}$. The unit element for this multiplication is $(1,0)$.
Let $i=(0,1)$. Then $i^2=(-1,0)$. We can now write the pair
 $(u,v)$ as $(u,v)=(u,0)+(0,1)(v,0)$ and identify it with
 the element $u+iv\in \mathbb{R}\oplus  i\mathbb{R}$.
In this way  we obtain the complex numbers 
$$\mathbb{C}=\mathbb{R}\oplus  i\mathbb{R}.$$
For $x=u+iv$, $y=u'+iv'$ with $u,v,u',v'\in \mathbb{R}$, we have
$x y=(uu' -v'v)+i(v'u+vu')$.

Let $\can$ denote  complex conjugation, given by $\bar{x}=u-iv$ for $x=u+iv$. Then we can also write $\overline{(u,v)}=(u,-v)$.

The above process can be repeated with $\C$ instead of $\R$: define a multiplication on  $\mathbb{C}\x \mathbb{C}$ via
$$(u,v)(u',v'):=(uu' -v'\bar{v}, \bar{u}v'+u'v),$$
for $u,v,u',v'\in \mathbb{C}$. The unit element for this multiplication is $(1,0)$. 
Let $j=(0,1)\in \C\x \C$. Then $j^2=(-1,0)$.
We identify the element $(u,v)\in \C\x\C$ with $u+jv\in \C\oplus j\C $. In this way we obtain the Hamilton quaternions
\[\mathbb{H}=\C \oplus j\C .\]
We define quaternion conjugation (again denoted $\can$) via
\[\bar{(u,v)}=(\bar{u}, -v).\]
\end{example}

Another iteration of this process, this time starting with $\mathbb{H}$, results in the (Cayley-Graves) octonion algebra $\mathbb{O}$.

\begin{remark}
In 1958 it was shown that finite-dimensional real division algebras can only
have dimension 1, 2, 4 or 8 (cf. \cite{Bo-M}).
In addition to the well-known algebras
$\mathbb{R}$, $\mathbb{C}$, $\mathbb{H}$ and $\mathbb{O}$, there exist other finite-dimensional real division algebras. The algebras
$\mathbb{R}$, $\mathbb{C}$, $\mathbb{H}$ and $\mathbb{O}$ are just  the alternative ones (see \cite[p.~48]{Sch}).
Indeed,  a complete classification 
of these algebras is still far away. Only certain subclasses are understood thus far.
Moreover, the restriction on the dimension only holds for real and real closed fields (see \cite{D-D-H}).
Over number fields there exist also higher dimensional division algebras as
well as division algebras which do not appear over the
real numbers.
\end{remark}

The previous construction of the Hamilton quaternions as a double of the complex numbers serves as a motivating example for obtaining generalized
 (associative) quaternion algebras as doubles,  as we will do now.

Let $F$ be a field. Let $K$ be  a separable quadratic field extension of $F$ with non-trivial Galois automorphism $\sigma:K\to K$.
Let  $b\in F^\times:=F\setminus \{0\}$. Then the 4-dimensional $F$-vector space
 $K\x K$ can be made into a new unital associative (but not commutative) algebra over $F$ via the multiplication
$$(u,v)(u',v'):=(uu'+b v'\sigma(v),\sigma(u)v'+u'v)$$
for $u,u',v,v'\in K$. The unit element is given by $(1,0)$. The automorphism $\sigma$ induces an involution $\can$ on $K\x K$ as follows:
\[\bar{(u,v)}:= (\sigma(u), -v).\]
Let $j=(0,1)$. Then $j^2=(b,0)$. We identify $(u,v)\in K\x K$ with $u+jv$ in $K\oplus jK $.
The algebra $K\oplus jK$ is called the \emph{Cayley-Dickson double of $K$ $($with scalar $b$$)$} and denoted by $\Cay(K,b)$ (cf. \cite{A2}).

The Cayley-Dickson double of $K$ yields a quaternion algebra over $F$. If $F$ has characteristic not 2 and
$K=F(\sqrt{a})=F(i)$, $\sigma: \sqrt{a} \mapsto -\sqrt{a}$, we have
$$\Cay(K,b)\cong (a,b)_{F}.$$
The standard basis $\{1,i,j,k\}$ of the quaternion algebra $(a,b)_F$ satisfies $i^2=a$, $j^2=b$,
$k=ij$ and $ij=-ji$. 

The Cayley-Dickson doubling process depends on the scalar $b$ only up to an invertible square, i.e.
$$\Cay(K,b)\cong \Cay(K,bd^2)$$
for every $d\in F^\times$. The algebra $\Cay(K,b)$ is a division algebra if and only if
$b\not \in N_{K/F}(K^\times)$, where $N_{K/F}$ is the norm of the
field extension $K/F$.

The quadratic \emph{norm} $N_A:A\to F$ of the algebra $A=\Cay(K,b)$ is given by
$$N_A(u+jv)=N_{K/F}(u)-b N_{K/F}(v)$$
for $u,v\in K$.
If $F$ has characteristic not 2, a straightforward computation shows that 
$$N_A(x)=x\bar{x}=\bar{x}x=q_0^2-aq_1^2-bq_2^2+abq_3^2$$
for $x=q_0+iq_1+jq_2+jiq_3$, $q_i\in F$.

\begin{example}
Let $\mathbb{H}=(-1,-1)_\mathbb{R}$ denote Hamilton's quaternion algebra.
This is just the algebra $\Cay(\mathbb{C},-1)$, as already established above.
\end{example}

\begin{example} The algebra
$(5,i)_{\mathbb{Q}(i)}=\Cay(K,i)$ with $K=\mathbb{Q}(i,\sqrt{5})$ is the quaternion algebra used
in the construction of the
Golden code \cite{Og}. This algebra is isomorphic to the cyclic algebra $(K/F,\sigma, i)$ where $\sigma:\sqrt{5}\mapsto-\sqrt{5}$.
Since $K=\mathbb{Q}(i,\sqrt{5})\cong\mathbb{Q}(i, \theta)$, where $\theta=
\frac{1+\sqrt{5}}{2}$ is the golden number, we also have that
$\Cay(\mathbb{Q}(i,\theta),i)\cong (5,i)_{\mathbb{Q}(i)}.$
\end{example}

\begin{remark} The Cayley-Dickson doubling process can be iterated: the quaternion algebras double to \emph{octonion algebras}, which in turn double to \emph{sedenion algebras}. Continuing the doubling process results in successive \emph{generalized Cayley-Dickson algebras}. 
\end{remark}

\section{Nonassociative quaternion division algebras}\label{sec4}

Let $F$ be a field of characteristic not 2. 
Let $K$ be a quadratic field extension of $F$ with non-trivial Galois automorphism
$\sigma$ and let $b\in K\setminus F$. We define an algebra structure on
the $F$-vector space $K\x K$ via the multiplication
$$(u,v)(u',v'):=(uu'+b v'\sigma(v), \sigma(u)v'+u'v)$$
for $u,u',v,v'\in K$. The multiplication is thus defined just
as for  quaternion algebras with the exception that we require the scalar $b$ to lie outside of $F$. We denote the algebra again by 
$\Cay(K,b)$. Its  unit element is $(1,0)$.

Since  $b\in K\setminus F$, the multiplication of  $\Cay(K,b)$ is not associative
anymore. It is not even third power-associative, meaning that in general
$(x^2)x\not=x(x^2)$. The algebra
$\Cay(K,b)$ with $b\in K$ and not in $F$ is called a \emph{nonassociative quaternion algebra} over $F$.

\begin{remark}\label{rem4.1}
 Let $K=F(\sqrt{a})=F(i)$ be a quadratic field extension and let $b\in K\setminus F$. Let $A=\Cay(K,b)$ be a nonassociative quaternion
 division algebra. Put
 $$j=(0,1)\in \Cay(K,b).$$
Then $A$ has $F$-basis $\{1,i,j,ji\}$ such that $i^2=a$, $j^2=b$  and
 $xj=j\sigma(x)$
 for all $x\in K$ (so in particular $ij=-ji$). 
\end{remark}

\begin{theorem}[{\cite{As-Pu}} or {\cite{W}}]\label{thm4}
The nonassociative quaternion algebra $\Cay (K, b) $
has nucleus $K$ and is a division algebra over $F$.
\end{theorem}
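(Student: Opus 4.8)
The plan is to establish the two assertions separately: first that $\cN(\Cay(K,b))=K$, then that $\Cay(K,b)$ has no zero divisors, which by the criterion recalled in Section~\ref{sec2} is equivalent to being a division algebra. Throughout I would write elements as $x=(u,v)=u+jv$ with $u,v\in K$ and work from the multiplication formula together with $j^2=b$, $xj=j\sigma(x)$ and $\sigma^2=\mathrm{id}$.

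For the inclusion $K\subseteq\cN(A)$ I would take $c\in K$, i.e. $c=(c,0)$, and verify directly that the three associators $[c,y,z]$, $[y,c,z]$ and $[y,z,c]$ vanish for all $y,z\in A$. Each is a short computation: one finds $cy=(cu,\sigma(c)v)$ while $yc=(cu,cv)$, so left and right multiplication by $c$ differ only through $\sigma$, and pushing $c$ through a second product yields the same pair whether one multiplies before or after, the only point to watch being the cancellation $\sigma(\sigma(c))=c$. Since the associator is $F$-trilinear, $\cN(A)$ is an $F$-subspace, and by the general remarks it is a subalgebra; I will use both facts below.

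For the reverse inclusion I would first record that $j\notin\cN(A)$: a direct evaluation gives $[j,j,j]=(j^2)j-j(j^2)=(0,\sigma(b)-b)$, which is nonzero precisely because $b\in K\setminus F$ forces $\sigma(b)\neq b$. Now let $x=p+jq\in\cN(A)$ with $p,q\in K$. Since $p\in K\subseteq\cN(A)$ and $\cN(A)$ is an $F$-subspace, $jq=x-p\in\cN(A)$. If $q\neq0$ then $q^{-1}\in K\subseteq\cN(A)$, and because the nucleus is closed under multiplication, $(jq)q^{-1}\in\cN(A)$; but a one-line computation gives $(jq)q^{-1}=(0,1)=j$, contradicting $j\notin\cN(A)$. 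Hence $q=0$ and $x\in K$, so $\cN(A)=K$.

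For the division algebra claim I would show there are no zero divisors, which amounts to proving that $x\neq0$ and $xy=0$ force $y=0$. Writing $x=u+jv$, $y=u'+jv'$, the two coordinates of $xy=0$ read $uu'+bv'\sigma(v)=0$ and $\sigma(u)v'+u'v=0$. If $v=0$ then $u\neq0$ and the two equations immediately give $u'=v'=0$. If $v\neq0$, I would solve the second equation for $u'=-\sigma(u)v'v^{-1}$, substitute into the first, and clear denominators to obtain $v'\bigl(bN_{K/F}(v)-N_{K/F}(u)\bigr)=0$. Here $N_{K/F}(v)\in F^\times$ since $v\neq0$, so $bN_{K/F}(v)-N_{K/F}(u)$ lies in $K\setminus F$ and is in particular nonzero; thus $v'=0$ and then $u'=0$, i.e. $y=0$. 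The crux of the whole statement—and the main obstacle worth isolating—is that the single hypothesis $b\notin F$ does double duty: it is exactly what produces the nonassociativity ($[j,j,j]\neq0$) and what forbids zero divisors (the factor $bN_{K/F}(v)-N_{K/F}(u)$ cannot vanish once it leaves $F$). The remaining work is routine bilinear bookkeeping.
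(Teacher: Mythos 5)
Your proof is correct. The paper itself gives no argument for this theorem---it is quoted from \cite{As-Pu} and \cite{W}---so there is no internal proof to compare against; your computation is essentially the standard one from those references: direct verification of the three associators to get $K\subseteq\cN(A)$, the observation that $[j,j,j]=(0,\sigma(b)-b)\neq 0$ combined with closure of the nucleus under multiplication to get the reverse inclusion, and the identity $\det$-style factorization $v'\bigl(bN_{K/F}(v)-N_{K/F}(u)\bigr)=0$ (which cannot force the second factor to vanish since $bN_{K/F}(v)\in K\setminus F$) together with the finite-dimensionality criterion of Section~\ref{sec2} for the division property. All steps check against the multiplication rule $(u,v)(u',v')=(uu'+bv'\sigma(v),\sigma(u)v'+u'v)$, and your isolation of the double role of the hypothesis $b\notin F$ is exactly the right point to emphasize.
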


Thus products involving a factor from $K$ are still associative. Furthermore, nonassociative quaternion algebras are always division algebras, which is not the case for the usual associative quaternion algebras.

\begin{remark}\label{rem4.3}
Let $F=\mathbb{R}$ and $b,b'\in\C$. Let
$b=p+iq$. Then $\Cay (\mathbb{C}, b) \cong \Cay (\mathbb{C}, b')$
if and only if $b'=t(p\pm iq)$ for some positive $t\in \mathbb{R}$  \cite[Thm. 14]{Al-H-K}.

Over $\mathbb{Q}$, we can easily find non-isomorphic nonassociative quaternion division algebras:
it was observed in \cite{W} that two nonassociative quaternion  algebras $\Cay(K,b)$ and $\Cay(L,c)$ can only be isomorphic if $L\cong K$.
Moreover,
$$\Cay(K,b)\cong \Cay(K,c) \text{ iff } g(b)=d\sigma(d)c$$
for some automorphism $g\in \Aut(K)$ and some non-zero $d\in K$.
\end{remark}

 Nonassociative quaternion algebras provided early examples of real nonassociative division algebras which
were neither power-associative nor quadratic.
They were investigated for the first time by Dickson \cite{D} in 1935 and by Albert \cite{A1} in 1948.
In 1987, Waterhouse \cite{W} completely classified these algebras over a field of characteristic not 2.

The only division algebras which appear in the classification of 4-dimensional $K$-associative algebras, cf. \cite{Al-H-K} or \cite{W}, are the generalized quaternion division algebras and the
 nonassociative quaternion division algebras over $F$.

\section{STBCs from nonassociative division algebras: the general setup}\label{sec5}

The general setup for constructing a fully diverse  STBC from an associative division algebra $A$ is simple: associate to each nonzero element $x\in A$ a square matrix $X$ over a fixed subfield of $A$ (normally the base field, via the left regular representation, or a maximal extension of the base field). The difference of any two such matrices $X-X'$ (with $X\not=X'$) will then always be invertible.
This procedure can be adapted to work in the nonassociative case as well, as will be explained in this section.

\subsection{The left regular representation}
Let $A$ be a nonassociative division algebra over $F$ of dimension $n$ as an $F$-vector space.
Let $a$ be any element in $A$. The left multiplication $\lambda_a:A\to A$ determined by $a$ is defined
by $x\mapsto ax$ for all $x\in A$. The operator $\lambda_a$ is  linear and the set $\{\lambda_a\,|\, a\in A\}$ is a  subspace of the
associative algebra $\End_F(A)$, the algebra of $F$-linear transformations on $A$.
Consider the left regular representation
$$\lambda: A \to \End_F(A), a\mapsto \lambda_a.$$
If $\lambda_a=\lambda_b$ then $ax=bx$ for all $x\in A$, hence $(a-b)x=0$ for all $x$, which yields $a=b$
($A$ is a finite-dimensional division algebra and as such does not have zero divisors) and we have an injection.

After a choice of $F$-basis for $A$, we can embed $\End_F(A)$ into the algebra ${\rm Mat}_n(F)$ of $n\x n$-matrices with entries from $F$, where $n=\dim_F(A)$.
In this way we get an embedding $\lambda:A\hookrightarrow {\rm Mat}_n(F)$ of vector spaces.

Contrary to the situation for associative division algebras, this only
embeds the vector space $A$ into the vector space ${\rm Mat}_n(F)$; the algebra structure of $A$ is disregarded here.

Nonetheless, all non-zero elements of $A$ are invertible, hence all $\lambda_a$ with $a\not=0$ are bijective and so all
non-zero matrices in $\lambda(A)$ have non-zero determinant. Now
$X\pm Y\in \lambda(A)$ for all $X,Y\in \lambda(A)$.
Thus $\lambda(A)$ constitutes a linear codebook which in addition is fully diverse,
since the rank of the difference of two distinct codewords is maximal.

\subsection{Representation over a maximal subfield}
For coding purposes, an associative division algebra $A$ is often  considered as a vector space over some
 subfield $K$ of the algebra $A$. Usually $K$ is maximal with respect to inclusion.
Given a nonassociative $F$-algebra $A$ with a maximal subfield $K$, this is not always possible
because of the nonexistence of the associative law. So what are the minimum requirements on a nonassociative algebra in order
to have such a representation?

Let $K$ be a subfield of the $F$-algebra $A$.
We need $A$ to be a right $K$-vector space, i.e. we need
$$x(cd)=(xc)d \text{ for all }x\in A, c,d\in K.$$
This is satisfied for instance if $K \subset \cN_r(A)$ or if $ K \subset \cN_m(A) $.

We also need that left multiplication $\lambda_a$ is a linear endomorphism of the right $K$-vector space
 $A$, i.e. that $(\alpha a)x=\alpha ( ax)$ for all $\alpha\in K$, $a,x\in A$, which is equivalent to
 $K\subset \cN_\ell(A)$. 
 Then
$$\lambda_{\alpha a}(x)=(\alpha a)x=\alpha ( ax)=\alpha \lambda_a(x)$$
for all $a,x\in A$, $\alpha\in K$ and $\lambda_a\in \End_K(A)$, so $\lambda: A \hookrightarrow \End_K(A), a\mapsto \lambda_a.$

Thus,  let $K$ be a subfield of $A$, maximal with respect to inclusion and assume that $K \subset \cN_r(A)\cap \cN_\ell(A)$ or $ K \subset \cN_m(A) \cap \cN_\ell(A)$.
 Consider $A$ as a right $K$-vector space.
After a choice of  $K$-basis for $A$, we can embed $\End_K(A)$ into the vector space
${\rm Mat}_r(K)$ where $r=\dim_K (A)$. In this way we get an embedding
$$\lambda:A\hookrightarrow {\rm Mat}_r(K)$$
 of vector spaces. Obviously, we have $X\pm Y\in \lambda(A)$ for all $X,Y\in \lambda(A)$.
Thus $\lambda(A)$ constitutes a linear codebook.

\begin{remark}
If we want to consider $A$
as a \emph{left} $K$-vector space, we 
require $K\subset \cN_\ell(A) $ and $K\subset \cN_m(A)$ and  adjust the above construction accordingly. 
\end{remark}

More generally one can do the following:
let $D$ be a subalgebra of $A$,  assume that
$D\subset \cN_r(A)\cap \cN_\ell(A)$ or that
$D \subset \cN_m(A) \cap \cN_\ell(A)$
 and suppose $A$ can be viewed as a free right $D$-module of rank $r$.
After a choice of a $D$-basis for $A$, we can embed the right $D$-module $\End_D(A)$ into the vector space
${\rm Mat}_r(D)$. In this way we get an embedding
$\lambda:A\to {\rm Mat}_r(D)$
 of $D$-modules. Obviously, we have $X\pm Y\in \lambda(A)$ for all $X,Y\in \lambda(A)$.
Thus $\lambda(A)$ is a linear codebook.

\begin{remark}
It is not known whether there exist 8-dimensional real division algebras with some (left, middle or right)
nucleus isomorphic to $\mathbb{H}$. The fact that there are no 8-dimensional real division algebras with
two associative nuclei (left, middle or right) isomorphic to $\mathbb{H}$ suggests a negative answer \cite[Proposition~3]{J-P}.
This need not be the case over other base fields, however.
\end{remark}
\medskip

\emph{In the remainder of this paper, all fields are assumed to be algebraic number fields unless stated otherwise.}

\section{STBC Design criteria}\label{sec6}

Let $\cC\subset \Mat_n(\C)$ be a space-time block code. In order for $\cC$ to perform well, it should satisfy property~(1) below (as remarked before) and as many of the other properties as possible.

 \begin{enumerate}[(1)]
 
\item It is \emph{fully diverse}: $\det(X-X')\not=0$ for all matrices $X\not=X',$ $X,X'\in \mathcal{C} $.
\item It has \emph{full rate}, which means that the $n^2$ degrees of freedom are used to transmit $n^2$ information symbols.
\item It has \emph{non-vanishing determinant}: the minimum determinant
of the code,
$$\delta(\mathcal{C})=\inf_{X'\not=X''\in \mathcal{C}}|\det(X'-X'')|^2,$$
 is bounded below by a constant even if the codebook $\mathcal{C}$   is infinite. 
\item It has \emph{cubic shaping}: each layer of a codeword is of the form $Rv$, where $R$ is a
unitary matrix and $v$ is a vector containing the information symbols.  As a consequence it is \emph{information lossless}.

\item It induces \emph{uniform average energy per antenna}: the $i$th antenna will transmit the $i$th row of the codeword;
on average, the norms of the rows should be equal in order to have a balanced repartition of
the energy at the transmitter.
 \end{enumerate}

These properties, originally considered for codes based on associative division algebras, also make sense in the nonassociative case.
Codes that satisfy all the properties above are called \emph{perfect codes}. We refer to \cite{O-R-B-V} for more details. The Golden Code \cite{B-R-V} is the best performing $2\x 2$ perfect STBC, cf. \cite{Og}.

\section{$2\times 2$ Codebooks from nonassociative quaternion division algebras}
\label{sec7}

STBCs based on associative quaternion algebras seem to have been considered explicitly for the first time in \cite{BR}. See also \cite{U-M} for more details. In this section we  look at the construction of STBCs based on nonassociative quaternion algebras.

Roughly speaking, constructing a nonassociative quaternion division
algebra boils down to choosing the nonzero scalar $b$ in the quadratic field extension $K=F(\sqrt{a})$ of the base field
$F$, and \emph{not} in $F$ itself. In contrast, $b$ \emph{is} chosen in $F$ in
the construction of a classical generalized quaternion
 algebra $(a,b)_F$ over $F$.  This usually gives us more
freedom of choice for $b$, despite the restriction that we will still have to require $|b|^2=1$ in order to get a
balanced repartition of the energy at the transmitter.
The choice of $F=\mathbb{Q}(i)$ allows us to transmit QAM constellations.

\subsection{Fully diverse codebook construction} \label{sec6.1}
Let $K$ be a quadratic field extension of $F$ with non-trivial Galois automorphism $\sigma$.
Let $A=\Cay(K,b)$  be a nonassociative quaternion division algebra over $F$ (so $b\in K\setminus F$) with $F$-basis $\{1,i,j,ji\}$ (see Remark~\ref{rem4.1}).

The algebra  $A$ is $K$-associative by Theorem~\ref{thm4}, hence we can consider $A$ as a right vector space over the subfield $K$ of $A$. The field $K$ is maximal with respect to inclusion. For $x\in A$, the left multiplication
 $\lambda_x:A\to A$, $a\mapsto xa$, is a $K$-linear endomorphism of the right $K$-vector space $A$.
Therefore $\lambda_x\in \End_K(A)$ and we get an injective $K$-linear map
$$\lambda: A \hookrightarrow \End_K(A), x\mapsto \lambda_x.$$
Consider the $K$-basis $\{1,j\}$ of $A$. Then $\End_K(A)\cong \Mat_2(K)$ as vector spaces and
we get an embedding $\lambda:A\hookrightarrow {\rm Mat}_2(K)$ of vector spaces, which sends $x\in A$ to the matrix of $\lambda_x$ with respect to the basis
$\{1,j\}$.

\begin{lemma}\label{lem6.1}
\[\lambda(A)\cong \Biggl\{\left [\begin {array}{cc}
x_0 &  b\sigma(x_1)\\
x_1 & \sigma(x_0)
\end {array}\right ] \,\Bigg\vert\, x_0,x_1\in K \Biggr\}
 \]
\end{lemma}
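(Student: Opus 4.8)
The plan is to compute the matrix of $\lambda_x$ with respect to the $K$-basis $\{1,j\}$ of $A$ directly, using the multiplication rule for $\Cay(K,b)$. First I would write an arbitrary element of $A$ as $x = x_0 + j x_1$ with $x_0, x_1 \in K$ (using the identification $(u,v) = u + jv$ from Remark~\ref{rem4.1}, so that the right $K$-vector space structure is given by $j x_1$ sitting in the second coordinate). To find the matrix, I only need to compute $\lambda_x(1) = x \cdot 1 = x$ and $\lambda_x(j) = x \cdot j$, and then express each result in the form $(\text{element of }K) + j\,(\text{element of }K)$, reading off the columns. Since $A$ is viewed as a \emph{right} $K$-vector space and $\lambda_x$ is $K$-linear by the $K$-associativity from Theorem~\ref{thm4}, these two images determine the matrix completely.

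The computation of $\lambda_x(1) = x_0 + j x_1$ is immediate and gives the first column $\begin{bmatrix} x_0 \\ x_1 \end{bmatrix}$. The substantive step is computing $x \cdot j$. I would use the defining multiplication $(u,v)(u',v') = (uu' + b v'\sigma(v),\, \sigma(u)v' + u'v)$ with $x = (x_0, x_1)$ and $j = (0,1)$, which yields $(x_0, x_1)(0,1) = (b\,\sigma(x_1),\, \sigma(x_0))$. In the $u + jv$ notation this reads $x \cdot j = b\sigma(x_1) + j\,\sigma(x_0)$, so the second column is $\begin{bmatrix} b\sigma(x_1) \\ \sigma(x_0) \end{bmatrix}$. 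Assembling the two columns produces exactly the claimed matrix, and letting $x_0, x_1$ range over $K$ gives the stated description of $\lambda(A)$ as a set.

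The only genuine subtlety — and the point I would state carefully rather than treat as routine — is the bookkeeping of \emph{which} side the scalars act on, since $A$ is nonassociative and we are working with a right module structure. One must confirm that the relation $xj = j\sigma(x)$ for $x \in K$ (Remark~\ref{rem4.1}) is consistent with reading the second coordinate as the $j$-component, and that the $K$-linearity used to justify "two images suffice'' is exactly the content of $K \subseteq \cN(A)$ from Theorem~\ref{thm4}. The main potential obstacle is not difficulty but a sign or conjugation slip: the entries involve $\sigma$ applied to $x_0$ and $x_1$ asymmetrically, so I would double-check the multiplication formula against the basis relations $j^2 = b$ and $xj = j\sigma(x)$ to ensure the off-diagonal entry carries the factor $b\sigma(x_1)$ and not, say, $\sigma(x_1)$ or $b x_1$. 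Once this is verified, the identification $\End_K(A) \cong \Mat_2(K)$ via the ordered basis $\{1, j\}$ completes the argument.
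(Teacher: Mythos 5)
Your proposal is correct and follows essentially the same route as the paper: compute $\lambda_x(1)$ and $\lambda_x(j)$ for $x=x_0+jx_1$ and read off the columns, with $K$-linearity (from $K=\cN(A)$) justifying that these two images determine the matrix. The only cosmetic difference is that you evaluate $x\cdot j$ from the defining coordinate formula $(x_0,x_1)(0,1)=(b\sigma(x_1),\sigma(x_0))$, whereas the paper uses the equivalent basis relations $xj=j\sigma(x)$ and $j^2=b$ from Remark~\ref{rem4.1}; both yield $\lambda_x(j)=b\sigma(x_1)+j\sigma(x_0)$.
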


\begin{proof}
Let $x=x_0+jx_1\in A$ with $x_0, x_1 \in K$. Then
\begin{align*}
\lambda_x(1)&=x_0+jx_1,\\
\lambda_x(j)&=x_0j+jx_1j=j\sigma(x_0)+j^2\sigma(x_1)=b\sigma(x_1)+j\sigma(x_0)
\end{align*}
by the rules in Remark~\ref{rem4.1}.
\end{proof}

\begin{lemma}
For any
\[0\not=X= \left[\begin {array}{cc}
x_0 &  b\sigma(x_1)\\
x_1 & \sigma(x_0)\\
\end {array}\right ]
 \]
with $x_i\in K$ for $i=1,2$,  we have
$$\det(X)=N_{K/F}(x_0)-b N_{K/F}(x_1)\not=0.$$
\end{lemma}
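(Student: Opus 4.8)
The plan is to compute the determinant directly and then invoke the division algebra structure of $A$ to conclude nonvanishing. First I would observe that for the matrix
\[
X=\begin{bmatrix} x_0 & b\sigma(x_1)\\ x_1 & \sigma(x_0)\end{bmatrix},
\]
the cofactor expansion gives
\[
\det(X)=x_0\sigma(x_0)-b\sigma(x_1)x_1=N_{K/F}(x_0)-bN_{K/F}(x_1),
\]
using that $N_{K/F}(x)=x\sigma(x)$ for $x\in K$. This is the asserted formula, and it matches the norm form $N_A(x_0+jx_1)=N_{K/F}(x_0)-bN_{K/F}(x_1)$ recorded in Section~\ref{sec4}, so in fact $\det(X)=N_A(x)$ where $x=x_0+jx_1$ is the preimage of $X$ under the embedding $\lambda$ of Lemma~\ref{lem6.1}.

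Next I would argue that this quantity is nonzero whenever $X\neq 0$, i.e.\ whenever $x\neq 0$. The clean way is to use that $A=\Cay(K,b)$ is a division algebra by Theorem~\ref{thm4}: any nonzero $x\in A$ gives a bijective left multiplication $\lambda_x$, so its matrix $X=\lambda(x)$ is invertible and hence $\det(X)\neq 0$. This is the most economical route, since it avoids re-deriving nonvanishing by hand.

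Alternatively, one can prove $\det(X)\neq 0$ directly, which also reproves the division property in this special case. Suppose $\det(X)=N_{K/F}(x_0)-bN_{K/F}(x_1)=0$. If $x_1=0$ then $N_{K/F}(x_0)=0$ forces $x_0=0$ (as $N_{K/F}$ has no nontrivial zeros on the field $K$), so $X=0$, contradiction. If $x_1\neq 0$ then $N_{K/F}(x_1)\in F^\times$ and we may solve $b=N_{K/F}(x_0)/N_{K/F}(x_1)\in F$, since both norms lie in $F$. But this contradicts the defining hypothesis $b\in K\setminus F$ of a nonassociative quaternion algebra. Hence $\det(X)\neq 0$ for all $X\neq 0$.

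The main thing to get right is the last step of the direct argument: the nonvanishing hinges entirely on $b\notin F$, which is precisely what distinguishes the nonassociative case from the associative one (where $b\in F$ and the determinant can vanish unless $b\notin N_{K/F}(K^\times)$). I would therefore emphasize that both summands $N_{K/F}(x_0)$ and $N_{K/F}(x_1)$ lie in $F$, so equality $N_{K/F}(x_0)=bN_{K/F}(x_1)$ with $x_1\neq 0$ would force $b\in F$, a contradiction. I expect no real obstacle beyond correctly invoking that $N_{K/F}$ is multiplicative and anisotropic on $K$ and keeping track of where each quantity lives relative to $F$.
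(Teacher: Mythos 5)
Your proposal is correct, and your ``direct'' argument is essentially the paper's own proof: the paper writes $b=u+\sqrt{a}v$ with $u,v\in F$, $v\neq 0$ (this is exactly the hypothesis $b\in K\setminus F$), computes
\[
\det(X)=(N_{K/F}(x_0)-uN_{K/F}(x_1))-\sqrt{a}\,v\,N_{K/F}(x_1),
\]
and observes that if $x_1\neq 0$ the $\sqrt{a}$-component is nonzero while if $x_1=0$ then $\det(X)=N_{K/F}(x_0)\neq 0$; that is the same case split and the same use of anisotropy of $N_{K/F}$ that you phrase as ``solving for $b$ would force $b\in F$.'' Your first, shorter route --- citing Theorem~\ref{thm4} that $\Cay(K,b)$ is a division algebra, so $\lambda_x$ is a bijective $K$-linear endomorphism and its matrix $X$ is invertible --- is also valid and not circular, since that theorem is quoted from the literature rather than derived from this lemma; the paper in fact uses exactly this reasoning for the left regular representation over $F$ in Section~\ref{sec5}, but proves the present lemma by direct computation, presumably because the explicit determinant formula is needed anyway for the non-vanishing determinant analysis in \S\ref{S6.2}. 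One tiny slip: the statement's index set should read $i=0,1$ rather than $i=1,2$, but that is a typo in the source, not in your argument.
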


\begin{proof} For $b=u+\sqrt{a}v\in K$, $u,v\in F,$ $v\not=0$, we compute
$$\det(X)=x_0\sigma(x_0)- b\sigma(x_1)x_1=(N_{K/F}(x_0)-u N_{K/F}(x_1))-\sqrt{a}v  N_{K/F}(x_1).$$
Since $(x_0,x_1)\not=(0,0)$ and since $N_{K/F}(x)=0$ iff $x=0$, we get
$\det(X)\not=0$.
\end{proof}

Since $X\pm Y\in \lambda(A)$ for all $X,Y\in \lambda(A)$, the difference of any two distinct elements in $\lambda(A)$ will have non-zero determinant.
 Therefore the (infinite linear) codebook built on $A$, $\cC:=\lambda(A)$,
is fully diverse.

\subsection{Non-vanishing determinant} \label{S6.2}

We closely follow the approach in \cite[\S17]{Be-Og2}.
The minimum determinant of $\cC$ determines the coding gain and is defined as
$$\delta(\mathcal{C})=\inf_{X'\not=X''\in \mathcal{C}}|\det(X'-X'')|^2.$$
The discussion in
 \cite[p.~73]{Be-Og2} can easily be adapted to the more general set-up of  nonassociative
algebras. Since the codebook $\cC$ is linear (it is based on an algebra) we have
$$\delta(\mathcal{C})=\inf_{0\not=X\in \mathcal{C}}|\det(X)|^2.$$
Let us compute the minimum determinant  of the codebook
\[\mathcal{C}=\Biggl\{\left [\begin {array}{cc}
c+d\sqrt{a} &  b(e-f\sqrt{a})\\
e+f\sqrt{a} & c-d\sqrt{a}
\end {array}\right ]\,\Bigg|\,c,d,e,f\in F \Biggr\},
 \]
obtained from the nonassociative quaternion division algebra $A=\Cay(K,b)$ with $K=F(\sqrt{a})$ and $b\in K\setminus F$ in \S\ref{sec6.1}.
We obtain
$$\delta(\mathcal{C})=\inf_{c,d,e,f\in F}|N_{K/F}(c+d\sqrt{a})-b
N_{K/F}(e+f\sqrt{a})|^2$$
with the infimum taken over all $(c,d,e,f)\not=(0,0,0,0)$, or equivalently
$$\delta(\mathcal{C})=\inf_{c,d,e,f\in F}|c^2-ad^2-be^2+abf^2|^2.$$
Thus
 $$\delta(\mathcal{C})\in K\cap \mathbb{R}^+.$$
Since $A$ is a division algebra, $\delta(\cC)\not=0$. If the code $\cC$ is finite, i.e.
 if the information symbols $c,d,e,f$ belong to a finite constellation in $F$, then $\delta(\cC)$ is bounded below by a constant. If the constellation size increases however, $\delta(\mathcal{C})$ can get arbitrarily
close to zero (e.g. let $(c,d,e,f)=(\frac{1}{n},0,0,0)$; as $n$ increases, $\delta(\cC)$ will approach zero). This will also be the case for infinite codes.

Codes whose  minimum determinant is bounded below by a constant which is independent of the size of the constellation from which the information symbols are chosen are said to satisfy  the \emph{non-vanishing determinant} (NVD) property, cf. Section~\ref{sec6}.

For associative division algebras over a number field $F$ and with maximal subfield $K$ infinite codes that satisfy
 the NVD property can often be obtained by restricting the entries in the codebook to the ring of integers $\mathcal{O}_K$. If $F=\Q$ or $F$ is  quadratic imaginary, then the resulting code will still be infinite, and its minimum determinant is guaranteed to be bounded
 away from zero, cf. \cite[Cor.~17.8]{Be-Og2}.
 
 Let us look at what happens for a code $\cC$, based on a nonassociative quaternion division algebra.

\begin{proposition}\label{prop6.2}
 Let $F$ be a number field and let $K=F(\sqrt{a})$ for some nonzero square-free
$a\in \mathcal{O}_F$. Let $b\in K\setminus F$. 
Let $\cC=\lambda(\Cay(K,b))$ and let $\mathcal{C}_{\mathcal{O}_K}$ denote the code  obtained from $\cC$
by restricting the elements of $K$ to elements of $\mathcal{O}_K$.  
Then there exists a constant $c>0$ such that 
\[\delta(\mathcal{C}_{\mathcal{O}_K})\in \frac{1}{c} \cO_K\cap \mathbb{R}^+.\]
If $K$ is quadratic imaginary, then there exists an integer $d>0$ such that
\[\delta(\mathcal{C}_{\mathcal{O}_K})\geq \frac{1}{d}\]
$($and so $\mathcal{C}_{\mathcal{O}_K}$ satisfies the NVD property$)$, otherwise $\delta(\mathcal{C}_{\mathcal{O}_K})$ can become arbitrarily small.
\end{proposition}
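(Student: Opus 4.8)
The plan is to reduce everything to an integrality statement about the determinant values $\det(X)=N_{K/F}(x_0)-b\,N_{K/F}(x_1)$ with $x_0,x_1\in\mathcal{O}_K$, and then to read off the dichotomy from the archimedean behaviour of $\mathcal{O}_K$. First I would record the basic integrality fact that $N_{K/F}(\mathcal{O}_K)\subseteq\mathcal{O}_F$: for $x\in\mathcal{O}_K$ the element $N_{K/F}(x)=x\sigma(x)$ is an algebraic integer lying in $F$, hence in $\mathcal{O}_F$. Since $b\in K$, I choose a positive integer $c$ with $cb\in\mathcal{O}_K$; then for every codeword
\[
c\det(X)=c\,N_{K/F}(x_0)-(cb)\,N_{K/F}(x_1)\in\mathcal{O}_K,
\]
so $\det(X)\in\tfrac1c\mathcal{O}_K$. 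Applying complex conjugation, which on $K$ is either the identity or $\sigma$ and in particular preserves $\mathcal{O}_K$, to $|\det(X)|^2=\det(X)\,\overline{\det(X)}$ gives $|\det(X)|^2\in\tfrac1{c^2}\mathcal{O}_K\cap\R^+$ for every nonzero $X$. This yields the first assertion (with $c$ replaced by $c^2$).

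For the NVD statement I specialize to $K$ imaginary quadratic, i.e. $F=\Q$ and $K\subset\C$ non-real. The key point is that complex conjugation restricted to $K$ is exactly the nontrivial automorphism $\sigma$, so that
\[
|\det(X)|^2=\det(X)\,\sigma(\det(X))=N_{K/\Q}\bigl(\det(X)\bigr).
\]
Because $c\det(X)\in\mathcal{O}_K\setminus\{0\}$, its norm $N_{K/\Q}(c\det(X))=c^2|\det(X)|^2$ is a nonzero rational integer, hence at least $1$ in absolute value. Thus $|\det(X)|^2\ge 1/c^2$ uniformly over all nonzero codewords, and $\delta(\mathcal{C}_{\mathcal{O}_K})\ge 1/d$ with $d=c^2$, giving NVD.

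The content of the converse is that this lower bound is special to the imaginary quadratic case, and this is where the difficulty lies. The conceptual reason is that, since $K\neq\Q$, we have $r_1+r_2=1$ (equivalently $\mathcal{O}_K^\times$ finite, by Dirichlet's unit theorem) \emph{precisely} when $K$ is imaginary quadratic; in every other case $\mathcal{O}_K^\times$ is infinite and the image of $\mathcal{O}_K$ (or of the norm group) at the fixed archimedean place is no longer discrete. I would exploit this in two ways. When $F$ itself has infinitely many units, a unit power $x_0=w^{n}\in\mathcal{O}_F^\times$ with $|w|<1$ at the chosen embedding already forces $\det(X)=w^{2n}\to0$. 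In the remaining cases, $F=\Q$ with $K$ real quadratic and $F$ imaginary quadratic with $[K:\Q]=4$, I must instead produce cancellation, approximating $b$ by ratios $N_{K/F}(x_0)/N_{K/F}(x_1)\in F$ via Dirichlet's approximation theorem (the field $F$ being dense near $b$ in $\R$ or $\C$) so that $|\det(X)|\to0$.

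The hard part is precisely this last construction: Dirichlet approximation hands me target values for $N_{K/F}(x_0)$ and $N_{K/F}(x_1)$, but I must realize those targets as genuine norms of elements of $\mathcal{O}_K$, so the real obstacle is controlling representability by the norm form together with the case distinction. I would handle it by first choosing $x_1$ so that the irrational part of $b\,N_{K/F}(x_1)$ is driven close to $\mathcal{O}_F$ (equidistribution), then matching the integral part by an available value of $N_{K/F}$, checking in each case that enough norms occur. The infinitude of $\mathcal{O}_K^\times$ guarantees the approximation never terminates, so that $\delta(\mathcal{C}_{\mathcal{O}_K})$ is not bounded away from $0$.
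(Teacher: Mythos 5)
Your handling of the two positive assertions is correct and essentially the paper's own argument. The paper clears denominators by writing $b=b_n/b_d$ with $b_n,b_d\in\cO_K$ and takes $c=|b_d|^2$, whereas you take a rational integer $c$ with $cb\in\cO_K$ and end up with $c^2$; both give the first claim, and both then obtain the NVD bound in the imaginary quadratic case from the fact that $N_{K/\Q}(c\det X)=c^2|\det X|^2$ is a positive rational integer. (Both arguments, like the statement itself, tacitly use that complex conjugation preserves $K$, so that $|\det X|^2$ lands in $\cO_K$ rather than in the ring of integers of $K\overline{K}$; this is harmless for the fields actually used later in the paper.)

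The gap is in the ``otherwise'' clause, and you have diagnosed it accurately but not closed it. Your unit argument ($x_0=w^n$, $x_1=0$ with $w\in\cO_F^\times$ of infinite order and $|w|<1$) disposes of the case where $\cO_F^\times$ is infinite, i.e.\ $F$ neither $\Q$ nor imaginary quadratic. In the remaining cases --- $K$ real quadratic over $F=\Q$, and $F$ imaginary quadratic with $K$ quartic --- the norm $N_{K/F}$ sends units of $\cO_K$ to roots of unity in $\cO_F$, so no codeword with $x_0=0$ or $x_1=0$ has small determinant, and one genuinely needs $N_{K/F}(x_0)-b\,N_{K/F}(x_1)\to 0$ with both entries nonzero. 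Your sketch (Dirichlet approximation plus representability by the norm form) is only a plan, and the obstacle you name is real: Dirichlet would suffice if $M=N_{K/F}(x_0)$ and $N=N_{K/F}(x_1)$ could be arbitrary elements of $\cO_F$, but they are constrained to the sparse, multiplicatively structured set of norms. The natural fallback of using perfect squares $M=s^2$, $N=t^2$ (always norms) gives $|s^2-bt^2|=|s-\sqrt{b}\,t|\,|s+\sqrt{b}\,t|$, and Dirichlet applied to the degree-four number $\sqrt{b}$ only yields a \emph{bounded} quantity, not one tending to zero; whether better approximations exist is in general unknown. So this sub-case remains unproved in your write-up. You should know, however, that the paper's own proof is no more complete here: it asserts that units $u\in\cO_K^\times$ with $|u|^2$ arbitrarily small force $\delta\to 0$, which is precisely the step that fails when $N_{K/F}(u)$ is a root of unity. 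Your proposal reproduces the sound parts of the paper's proof, is more careful about \emph{which} unit group matters, and openly flags the step the paper glosses over --- but neither argument finishes it.
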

 
 \begin{proof} Write $b$ as a fraction $b=\frac{b_n}{b_d}$ with $b_n,b_d\in \mathcal{O}_K$
(not necessarily unique)
 and $b_d\not=0$. A codeword of $\mathcal{C}_{\mathcal{O}_K}$ is of the form 
\[ \left[\begin {array}{cc}
u &  b\sigma(v)\\
v & \sigma(u)\\
\end {array}\right ]
 \]
with $u,v\in \cO_K$. Thus we have
\begin{align*}
\delta(\mathcal{C}_{\mathcal{O}_K})&=\inf_{u,v\in \mathcal{O}_K}|N_{K/F}(u)-b N_{K/F}(v)|^2\\
&=\inf_{u,v\in \mathcal{O}_K} \frac{1}{|b_d|^2}
|b_d N_{K/F}(u)-b_n N_{K/F}(v)|^2\in \frac{1}{|b_d|^2}  \mathcal{O}_K\cap \mathbb{R}^+
\end{align*}
with the infimum taken over all $(u,v)\not=(0,0)$.  Taking $c=|b_d|^2$ establishes the first part of the proposition.

Assume that $K$ is  quadratic imaginary (i.e. $F=\Q$ and  $a<0$). Then  it follows from Proposition~\ref{propA1} that   $\cO_K \cap \mathbb{R}^+=\mathbb{N}$ and that
$|b_d|^2=N_{K/\Q}(b_d)$ is a positive integer. Thus,   among all possible pairs
 $(b_n, b_d) \in \cO_K^2$ (with $b_d\not=0$) that satisfy $b=\frac{b_n}{b_d}$, 
we can choose a pair $(b_n,b_d)$ in such a way that $|b_d|^2$ is minimal. Let $d=|b_d|^2$. 
    
If $K$ is  not quadratic imaginary it follows from the Dirichlet Unit Theorem (Proposition~\ref{propA8}) that $\cO_K$ contains units $u$ such that $|u|^2$ is arbitrarily large or small,
so that  
$\delta(\mathcal{C}_{\mathcal{O}_K})$ can become arbitrarily small.
\end{proof}

\begin{remark} 
 It follows from the proposition that codes based on nonassociative quaternion algebras only satisfy the NVD property  if we assume that 
$K$ is a quadratic imaginary number field.

Assume that $K$ is  quadratic imaginary.   If we assume in addition that $\cO_K$ is a unique factorization domain (or, equivalently, a principal ideal domain; cf.  Appendix~\ref{AppA}), we can write $b$ as an irreducible fraction $b=b_n/b_d$ and $b_d$ will be unique up to multiplication by a unit. By the Dirichlet Unit Theorem the only units in $\cO_K$ are roots of unity, so that $|b_d|^2$ is unique. For information symbols $u$, $v$ taken from a QAM constellation we use the field $K=\Q(i)$ which is quadratic imaginary and whose ring of integers $\cO_K=\Z[i]$ is a unique factorization domain.
\end{remark}

\subsection{Information lossless encoding} A code $\cC$ is \emph{information lossless} if
it is obtained from information symbols in such a way that the energy needed to transmit them is the same as the energy needed to transmit the information symbols without encoding. By \cite[Prop. 3.5]{O-B-V} it suffices to construct the layers  of each codeword from the information symbols vector by applying a unitary matrix. This procedure is called \emph{cubic shaping} (cf. \cite[p.~3886]{O-R-B-V}), as it corresponds to an isometry transformation of the cubic lattice $\mathbb{Z}[i]^n$ in the case of information symbols taken from a QAM constellation. The term \emph{good shaping} is also used.

The energy needed to transmit  a complex number $z$ is determined by $|z|^2$. The energy needed to transmit a codeword $X=[x_{i,j}]\in \cC$ is determined by  its squared Frobenius norm $\Vert X\Vert^2 = \sum_{i,j} |x_{i,j}|^2$. 

Information lossless encoding of information symbols $c,d,e,f \in F$ into a codeword
\[ X=
\left [\begin {array}{cc}
x_0 &  b\sigma(x_1)\\
x_1 & \sigma(x_0)
\end {array}\right ]
\]
of $\cC$ can be done as follows. Let $\{u_0,u_1\}$ be an $F$-basis of $K$. Let
\[ G=
\left [\begin {array}{cc}
u_0 &  u_1\\
\sigma(u_0) & \sigma(u_1)
\end {array}\right ]
\]
be the matrix of the embeddings of  the basis. Let $x_0=cu_0+du_1$ and $x_1=eu_0+fu_1$ be elements of $K$ and consider the vectors $\mathbf{x}_0=(c,d)^T$, $\mathbf{x}_1=(e,f)^T$, each containing two information symbols.
Then
$$G \mathbf{x}_0=(x_0,\sigma(x_0))^T, \quad G \mathbf{x}_1=(x_1,\sigma(x_1))^T.$$
Let $\Gamma_1=I_2$ be the identity matrix and
\[ \Gamma_2=
\left [\begin {array}{cc}
0 &  b\\
1 & 0
\end {array}\right ].
\]
Then we can write $X$ as the sum of its layers,
$$X=\Gamma_1{\rm diag}(G\mathbf{x}_0)+\Gamma_2{\rm diag}(G\mathbf{x}_1)=\left [\begin {array}{cc}
x_0 &  0\\
0 & \sigma(x_0)
\end {array}\right ] +
\left [\begin {array}{cc}
0 &  b\sigma(x_1)\\
x_1 & 0
\end {array}\right ].$$
In order for the encoding to be information lossless, we want $\Gamma_2$ and $G$ to be unitary. Note that the matrix $\Gamma_2$ is unitary if and only if $|b|^2=1$.

In order to find a good unitary matrix $G$ (if one exists) and also to satisfy the NVD property, one usually restricts $x_0$ and $x_1$ to the ring of integers $\cO_K$ or an ideal $I$ of $\cO_K$ with ``good'' properties. See \cite[\S4.3-4.4]{O-B-V} for more details.
\bigskip

In the rest of this section we construct fully diverse $2\x 2$ codebooks $\lambda(A)$, based on nonassociative quaternion algebras $A$. From these we construct codebooks $\cC$ that satisfy the NVD and/or cubic shaping properties in certain cases. The codebooks are all infinite. When restricting entries of the codewords to the ring of integers $\cO_K$, we indicate this by writing $\cC_{\cO_K}$.

\subsection{Nonassociative Alamouti codes}

Let $F=\mathbb{R}$, $K=\mathbb{C}$ and let $\sigma=\can$ be  complex
conjugation. Recall \cite{Al} that the Alamouti Code is obtained from the quaternion division algebra $\mathbb{H}=(-1,-1)_\mathbb{R}$ over
$\mathbb{R}$ and yields codewords of the form
\[\left [\begin {array}{cc}
c+id &  -e+if\\
e+if & c-id
\end {array}\right ],
\]
with  $c+id,e+if$ the information symbols  ($c,d,e,f\in \mathbb{R}$). Used with QAM symbols it achieves the
diversity-multiplexing gain trade-off (DMT) of a MISO channel with 2 transmit antennas and 1 receive antenna.

\begin{example} \label{ex7.1} (i)
 Let $A=\Cay(\mathbb{C},i)$.  Then
\[\lambda(A)= \Biggl\{\left [\begin {array}{cc}
c+id &  f+ie\\
e+if & c-id
\end {array}\right ]\,\Bigg|\, c+id,e+if\in \mathbb{C} \Biggr\}.
\]
The codebook obtained from $\Cay(\mathbb{C},i)$ closely resembles the Alamouti Code.

Consider also  $B=\Cay(\mathbb{C},-i)$.  Then
\[\lambda(B)= \Biggl\{\left [\begin {array}{cc}
c+id &  -(f+ie)\\
e+if & c-id
\end {array}\right ]\,\Bigg|\, c+id,e+if\in \mathbb{C} \Biggr\}
\]
is another Alamouti-like code. Note that the algebras $A$ and $B$ are isomorphic by Remark~\ref{rem4.3}. 

Let us now consider QAM symbols only. 

(ii) Let  $A=\Cay(\mathbb{Q}(i),i)$ over $\mathbb{Q}$ and restrict the entries in $\lambda(A)$ to $\Z[i]$.
Then
\[\lambda(A)_{\Z[i]}= \Biggl\{\left [\begin {array}{cc}
c+id &  f+ie\\
e+if & c-id
\end {array}\right ]\,\Bigg|\, c+id,e+if\in \mathbb{Z}[i] \Biggr\}.
\]
From $\lambda(A)_{\Z[i]}$ we obtain a code $\cC_A$ with good shaping as follows: $\{1,i\}$ is a $\mathbb{Z}$-basis of $\mathbb{Z}[i]$ and
\[ \frac{1}{\sqrt{2}}G=\frac{1}{\sqrt{2}}
\left [\begin {array}{cc}
1 & i\\
1 & -i
\end {array}\right ]
\]
is a unitary matrix. So a codeword of $\cC_A$ is given by
\[\frac{1}{\sqrt{2}} \left [\begin {array}{cc}
c+id &  f+ie\\
e+if & c-id
\end {array}\right ],
\]
$c,d,e,f\in\mathbb{Z}$.

(iii) Similarly, $B=\Cay(\mathbb{Q}(i),-i)$ yields
\[\lambda(B)_{\Z[i]}= \Biggl\{\left [\begin {array}{cc}
c+id &  -(f+ie)\\
e+if & c-id
\end {array}\right ]\,\Bigg|\, c+id,e+if\in \mathbb{Z}[i] \Biggr\},
\]
resulting in a shaped code  $\cC_B$ with codewords 
\[\frac{1}{\sqrt{2}} \left [\begin {array}{cc}
c+id &  -(f+ie)\\
e+if & c-id
\end {array}\right ],
\]
$c,d,e,f\in\mathbb{Z}$.

In examples  (ii) and (iii), $F=\mathbb{Q}$, $K=\mathbb{Q}(i)$ is a quadratic imaginary number field, $b=\pm i$ and $\cO_K=\mathbb{Z}[i]$ is a principal ideal domain.
Hence, before shaping, the minimum determinant of each code is bounded below by $1$
by Proposition~\ref{prop6.2}. Thus the minimum determinant of both shaped codes is lower bounded by the constant~${1}/{2}$.

To summarize: the codes in (ii) and (iii) are fully diverse, satisfy the NVD property, have good shaping and clearly also satisfy the  uniform average transmitted energy per antenna property. Used for a $2\times 2$ MIMO channel
they are only half-rate though, since 4 transmitted signals are used to transmit 2 QAM  information symbols.
\end{example}

\subsection{Nonassociative Golden Codes}\label{S7.2}

  Let $F=\mathbb{Q}(i)$ and let $K= \mathbb{Q}(i)(\sqrt{5})$.  Then $\mathcal{O}_K=\mathbb{Z}[i][\frac{1+\sqrt{5}}{2}]$.
The Golden Code \cite{B-R-V} uses the maximal $\mathbb{Z}[i]$-order  in the quaternion division algebra
$(5,i)_{\mathbb{Q}(i)}=\Cay\bigl( \mathbb{Q}(i)(\sqrt{5}),i\bigr)$ which can be described
by the Cayley-Dickson doubling $\Cay\Bigl(\mathbb{Z}[i][\frac{1+\sqrt{5}}{2}],i\Bigr)$, defined in the obvious way. (Note that associative quaternion algebras are precisely the cyclic algebras of dimension $4$.) 
After
 shaping by
$\frac{1}{\sqrt{5}}$, a codeword of $\cC$ is thus of the form
\[ X=\frac{1}{\sqrt{5}}
\left [\begin {array}{cc}
c+d\theta &  e+f\theta\\
i(e+f\sigma(\theta) ) & c+d\sigma({\theta})
\end {array}\right ]
\]
with $\theta=\frac{1+\sqrt{5}}{2}$ the golden number, $\sigma:\mathbb{Q}(i)(\sqrt{5})\to \mathbb{Q}(i)(\sqrt{5})$,
$\sigma(i)= i$, $\sigma(\sqrt{5})= -\sqrt{5}$ and $c,d,e,f\in \mathbb{Z}[i]$.
To obtain an energy-efficient code, the entries in the codewords are then restricted to elements in the principal
ideal $I$ in $\mathcal{O}_K$ of norm 5 in $\mathbb{Q}$ which is generated by $\alpha=1+i-i\theta$. So, finally the Golden Code is
given by the codewords
\[ X=\frac{1}{\sqrt{5}}
\left [\begin {array}{cc}
\alpha(c+d\theta) &  \alpha(e+f\theta)\\
i\sigma(\alpha)(e+f\sigma(\theta)) & \sigma(\alpha) \sigma(c+d\sigma(\theta))
\end {array}\right ]
\]
with $c,d,e,f\in \mathbb{Z}[i]$ and has minimum determinant $1/5$. We refer to \cite{B-R-V} for the details.

\begin{example} \label{ex7.3}
Consider the nonassociative quaternion division algebra
$$A=\Cay\Bigl( \mathbb{Q}(i)(\sqrt{5}),\frac{i+\sqrt{5}}{i-\sqrt{5}}\Bigr)$$
 over
$\mathbb{Q}(i)$, where $|\frac{i+\sqrt{5}}{i-\sqrt{5}}|^2=1$ guarantees that $\Gamma_2$ is unitary.

The codebook based on $A$ is
\[\lambda(A)=\Biggl\{\left [\begin {array}{cc}
c+d\theta &  e+f\theta\\
\frac{i+\sqrt{5}}{i-\sqrt{5}}(e+f\sigma(\theta) ) & c+d\sigma(\theta)
\end {array}\right ]\,\Bigg|\,c,d,e,f\in \mathbb{Q}(i) \Biggr\}.
 \]
(Compared to the general code construction in Lemma~\ref{lem6.1}, we are transposing the matrices here in order  to better compare them  with the Golden Code matrices above. This does not influence the  behaviour of the code.) The code has full diversity and uniform average transmitted energy  per antenna. Now
$\{1,\theta\}$ is a $\mathbb{Q}(i)$-basis of $\mathbb{Q}(i)(\sqrt{5})$, but
\[ G=
\left [\begin {array}{cc}
1 &  \theta\\
1 & \sigma(\theta)
\end {array}\right ]
\]
is not a unitary matrix, so we have an energy shaping loss.  To obtain an energy-efficient code,  restrict
the entries in the codeword again to elements in the principal
ideal $I$ in $\mathcal{O}_K$ generated by $\alpha=1+i-i\theta$. Then a nonassociative Golden Code is
given by the codewords
\[ X=\frac{1}{\sqrt{5}}
\left [\begin {array}{cc}
\alpha(c+d\theta) &  \alpha(e+f\theta)\\
\frac{i+\sqrt{5}}{i-\sqrt{5}}\sigma(\alpha)(e+f\sigma(\theta)) & \sigma(\alpha) \sigma(c+d\sigma(\theta))
\end {array}\right ]
\]
with $c,d,e,f\in \mathbb{Z}[i]$. The choice of the ideal $I$ is optimal here for the exact same reasons as the ones
given in \cite[p.~1433]{B-R-V} and yields good shaping. The code is also
 full rate, fully diverse and has uniform average transmitted energy per antenna.
\end{example}

With the same arguments codes can be constructed using  any of the infinitely many scalars $b\in  \mathbb{Q}(i)(\sqrt{5})\setminus \Q(i)$ with $|b|^2=1$. All these codes, however, have vanishing determinant by Proposition~\ref{prop6.2}. We give another example:

\begin{example} \label{ex7.4}
Let $b=\frac{2i+\sqrt{5}}{3}$ and consider
the nonassociative quaternion algebra
$$\Cay\Bigl( \mathbb{Q}(i)(\sqrt{5}),\frac{2i+\sqrt{5}}{3}\Bigr)$$
 over $\mathbb{Q}(i)$. Again $|\frac{2i+\sqrt{5}}{3}|^2=1$ and we obtain another
  code which has full diversity
and uniform average transmitted energy per antenna. To obtain an energy-efficient code, we restrict
the entries in the codeword again to elements in the principal
ideal $I$ in $\mathcal{O}_K$ generated by $\alpha=1+i-i\theta$. Then another nonassociative Golden Code with good shaping is
given by the codewords
\[ X=\frac{1}{\sqrt{5}}
\left [\begin {array}{cc}
\alpha(c+d\theta) &  \alpha(e+f\theta)\\
\frac{2i+\sqrt{5}}{3}\sigma(\alpha)(e+f\sigma(\theta)) & \sigma(\alpha) \sigma(c+d\sigma(\theta))
\end {array}\right ]
\]
with $c,d,e,f\in \mathbb{Z}[i]$.
\end{example}

\subsection{Optimality of the Golden Code} 
Oggier \cite{Og} shows that the Golden Code is optimal inside the class of cyclic algebra based $2\times 2$ codes built over fields $K=\Q(i)(\sqrt{d})$    in the following sense: the minimum determinant of such codes is inversely proportional to $|d_{K/\Q(i)}|$, where $d_{K/\Q(i)}$ denotes the relative discriminant of $K/\Q(i)$. 
For the Golden Code $|d_{K/\Q(i)}|=5$. 
While it is possible to consider fields $K=\Q(i)(\sqrt{d})$ with $|d_{K/\Q(i)}|<5$, Oggier shows that  the resulting codes are no longer fully diverse \cite[III]{Og}. 

This problem does not occur in the nonassociative case by Theorem~\ref{thm4}. It is possible to construct fully diverse nonassociative codes over fields $K=\Q(i)(\sqrt{d})$ with $|d_{K/\Q(i)}|<5$,  but by Proposition~\ref{prop6.2} these codes do not satisfy the NVD property.  In the examples below we will consider the cases $|d_{K/\mathbb{Q}(i)}|=4$ and $|d_{K/\mathbb{Q}(i)}|=3$. The case  $|d_{K/\mathbb{Q}(i)}|=2$ does not exist, cf. Proposition~\ref{propA5}.

\begin{example}\label{ex7.5}
Let  $K=\mathbb{Q}(i)(\sqrt{2})$.
Then $|d_{K/\mathbb{Q}(i)}|=4$ and $\sigma(\sqrt{2})=-\sqrt{2}$. Moreover, $K= \mathbb{Q}(i)(\zeta_8)$ where $\zeta_8=\frac{1+i}{\sqrt{2}}$
is an 8th root of unity and $\sigma(\zeta_8)=-\zeta_8$.  We have that $\{1,\zeta_8\}$ is a $\mathbb{Z}[i]$-basis for the ring of integers
$\mathcal{O}_K=\mathbb{Z}[i][\zeta_8]$.

 Consider the nonassociative quaternion division algebra
$A=\Cay( \mathbb{Q}(\zeta_8),\zeta_8)$
over $ \mathbb{Q}(i)$.
The choice of $b=\zeta_8$  guarantees that $\Gamma_2$ is unitary, since $|\zeta_8|^2=1$.
We obtain the codebook
\begin{align*}
\lambda(A)&=\Biggl\{
\left [\begin {array}{cc}
x_0 &  \zeta_8\sigma(x_1)\\
x_1 & \sigma(x_0)
\end {array}\right ]\, \Bigg|\, x_0, x_1\in \mathbb{Q}(i)(\zeta_8) \Biggr\} \\
&=\Biggl\{
\left [\begin {array}{cc}
u_0+\zeta_8 w_0 &  \zeta_8(u_1-\zeta_8 w_1)\\
u_1+\zeta_8 w_1 & u_0-\zeta_8 w_0
\end {array}\right ] \,\Bigg|\, u_0, u_1,w_0, w_1\in \mathbb{Q}(i) \Biggr\}.
\end{align*}
Now $\{1,\zeta_8\}$ is a $\mathbb{Q}(i)$-basis of $\mathbb{Q}(i)(\zeta_8)$ and
\[ \frac{1}{\sqrt{2}}G=  \frac{1}{\sqrt{2}}
\left [\begin {array}{cc}
1 &  \zeta_8\\
1 & \sigma(\zeta_8)
\end {array}\right ]
=  \frac{1}{\sqrt{2}}
\left [\begin {array}{cc}
1 &  \zeta_8\\
1 & -\zeta_8
\end {array}\right ]
\]
is a unitary matrix. So after multiplying the matrices in the codebook by $ \frac{1}{\sqrt{2}}$ and restricting the information symbols to $\Z[i]$, we obtain a code that has good
shaping:
\[ \mathcal{C}=
\Biggl\{ \frac{1}{\sqrt{2}}
\left [\begin {array}{cc}
u_0+\zeta_8 w_0 &  \zeta_8(u_1-\zeta_8 w_1)\\
u_1+\zeta_8 w_1 & u_0-\zeta_8 w_0
\end {array}\right ] \,\Bigg|\, u_0, u_1,w_0, w_1\in \mathbb{Z}[i] \Biggr\}.
\]
The code $\cC$ is full rate, has full diversity and good shaping.
The factor $\zeta_8$ in the
first row of the codeword guarantees uniform average transmitted energy per antenna since $|\zeta_8|^2=1$.
This code does not satisfy the NVD property however by Proposition~\ref{prop6.2}.
\end{example}

\begin{example}
Let $K=\mathbb{Q}(i)(\sqrt{3})$. Then $|d_{K/\mathbb{Q}(i)}|=3$ 
 and $\sigma(\sqrt{3})=-\sqrt{3}$.
Moreover, $K=\mathbb{Q}(i)(\zeta_3)$ where $\zeta_3=e^{2\pi i /3}=\frac{-1+i\sqrt{3}}{2}$ is a third root of unity. We have $\sigma(\zeta_3)=\frac{-1-i\sqrt{3}}{2}=\overline{\zeta_3}$.
We know that $\{1,\zeta_3\}$ is a $\mathbb{Z}[i]$-basis for the ring of integers
$\mathcal{O}_K=\mathbb{Z}[i][\zeta_3]$.

Consider the nonassociative quaternion division algebra
$A=\Cay(\mathbb{Q}(i)(\zeta_3),\zeta_3)$ over $\mathbb{Q}(i)$. We obtain the codebook
\[\lambda(A)=\Biggl\{\left [\begin {array}{cc}
u_0+\zeta_3 w_0 &  \zeta_3(u_1+\overline{\zeta_3} w_1)\\
u_1+\zeta_3 w_1& u_0+\overline{\zeta_3}w_0
\end {array}\right ]\,\Bigg|\,u_0, u_1,w_0, w_1\in \mathbb{Q}(i) \Biggr\}.
 \]
 This  time the matrix $G$ (up to scaling) is not unitary. Thus the energy required to send the linear combination of the information symbols
 on each layer is higher than the energy needed
to send the information symbols themselves and we would
 still have to optimize for energy efficiency. In addition the discriminant of this code is not bounded away from  zero by Proposition~\ref{prop6.2}. 
\end{example}

\section{$2\times 4$ Multiblock space-time codes from nonassociative quaternion algebras}\label{sec8}

Let $F$  be a number field, let $a\in F^\x$ and 
let $K=F(\sqrt{a})$ be a quadratic field extension of $F$ with non-trivial Galois automorphism $\sigma$
and norm $N_{K/F}(x)=x\sigma(x)$.  Let $b\in K\setminus F$, so that
 $A=\Cay(K,b)$ is a nonassociative quaternion division algebra.
 A $2\x 4$  \emph{multiblock space-time code} based on $A$ is a set of matrices of the form $Y=[X |\sigma(X)]$ where $X\in \lambda(A)$
 (see \cite{L-M-M} for a more general construction in the associative case).

This yields a codebook $\cC$ consisting of  matrices of the form
\[Y= [X |\sigma(X)]= \left [\begin {array}{cc|cc}
x_0 &  b\sigma(x_1) &    \sigma(x_0) &  \sigma(b) x_1\\
x_1 & \sigma(x_0) &  \sigma(x_1) &  x_0   \\
\end {array}\right ]
 \]
with $x_0, x_1\in K$. They have full rank since $X$ comes from the division algebra $A$. In this set-up, we want the code to satisfy a \emph{generalized} NVD property (see \cite{Lu}) which can be achieved by bounding
the \emph{generalized minimum determinant}
$$\delta_g(\mathcal{C})=\inf_{0\not=X \in \lambda(A)}|\det(X) \det(\sigma(X))|$$
away from zero. 

\begin{proposition} 
Let $F$ be a number field and let $K=F(\sqrt{a})$ for some nonzero square-free
$a\in \mathcal{O}_F$. Let $b\in K\setminus F$. 
Let $A=\Cay(K,b)$ and let 
\[\mathcal{C}_{\mathcal{O}_K}=\{[X|\sigma(X)] \mid X\in \lambda(A)_{\cO_K}\}\]
denote the $2\x 4$ multiblock code  obtained from $\cC$
by restricting the elements of $K$ to elements of $\mathcal{O}_K$.  
Then there exists a constant $c>0$ such that 
\[\delta_g(\mathcal{C}_{\mathcal{O}_K})\in \frac{1}{c} \cO_F\cap \mathbb{R}^+.\]
If $F=\mathbb{Q}$ or $F$ is  quadratic imaginary, then there exists an integer $d>0$ such that
\[\delta_g(\mathcal{C}_{\mathcal{O}_K})\geq \frac{1}{\sqrt{d}}\]
$($and so $\mathcal{C}_{\mathcal{O}_K}$ satisfies the generalized NVD property$)$, otherwise $\delta_g(\mathcal{C}_{\mathcal{O}_K})$ can become arbitrarily small.
\end{proposition}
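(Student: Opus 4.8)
The plan is to collapse the two determinant factors of the multiblock code into a single relative norm, after which the argument runs closely parallel to Proposition~\ref{prop6.2}, the extra norm down to $\Q$ being what eventually supplies (or destroys) the lower bound. First I would record the identity that makes the construction work. Since $\sigma(X)$ is obtained from $X$ by applying $\sigma$ entrywise and the determinant is a polynomial in the entries with rational integer coefficients, $\det(\sigma(X))=\sigma(\det(X))$. Combined with the determinant formula $\det(X)=N_{K/F}(x_0)-bN_{K/F}(x_1)\in K$ of \S\ref{sec6.1}, this gives
\[\det(X)\det(\sigma(X))=\det(X)\,\sigma(\det(X))=N_{K/F}(\det(X))\in F.\]
Hence $\delta_g(\cC_{\cO_K})=\inf|N_{K/F}(\det(X))|$, the infimum taken over $(x_0,x_1)\in\cO_K^2\setminus\{(0,0)\}$, and the whole problem is now about the single element $N_{K/F}(\det(X))$ of $F$.

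For the first assertion I would clear denominators. Write $b=b_n/b_d$ with $b_n,b_d\in\cO_K$ and $b_d\neq0$. Because $N_{K/F}(\cO_K)\subseteq\cO_F\subseteq\cO_K$, the element $b_d\det(X)=b_dN_{K/F}(x_0)-b_nN_{K/F}(x_1)$ lies in $\cO_K$, so applying $N_{K/F}$ and using multiplicativity yields
\[N_{K/F}(b_d)\,N_{K/F}(\det(X))=N_{K/F}(b_d\det(X))\in\cO_F.\]
Thus every value $N_{K/F}(\det(X))$ lies in $\tfrac{1}{N_{K/F}(b_d)}\cO_F$, and taking $c=|N_{K/F}(b_d)|$ places $\delta_g(\cC_{\cO_K})$ in $\tfrac{1}{c}\cO_F\cap\R^+$, exactly in the sense of Proposition~\ref{prop6.2}.

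The two remaining cases then follow from the arithmetic of $\cO_F$. When $F=\Q$ or $F$ is quadratic imaginary, the algebraic integer $N_{K/F}(b_d\det(X))\in\cO_F$ is nonzero (since $A$ is a division algebra, $\det(X)\neq0$) and has complex absolute value at least $1$: for $F=\Q$ it is a nonzero rational integer, and for $F$ quadratic imaginary $|w|^2=N_{F/\Q}(w)\in\N$ by Proposition~\ref{propA1}. Using the tower of norms, $|N_{K/F}(b_d)|=\sqrt{N_{K/\Q}(b_d)}$, so dividing gives $\delta_g(\cC_{\cO_K})\geq1/\sqrt{d}$ with $d=N_{K/\Q}(b_d)$ (in the case $F=\Q$ one instead takes $d=N_{K/\Q}(b_d)^2$), which is the generalized NVD property. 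In the opposite case $F$ has unit rank at least $1$ by the Dirichlet Unit Theorem (Proposition~\ref{propA8}); I would exploit this with $x_1=0$ and $x_0=\varepsilon\in\cO_F^\times\subseteq\cO_K^\times$, for which $\det(X)=N_{K/F}(\varepsilon)=\varepsilon^2$ and hence $|N_{K/F}(\det(X))|=|\varepsilon|^4$. A standard log-lattice argument shows the fixed embedding $F\hookrightarrow\C$ cannot have absolute value $1$ on all of $\cO_F^\times$ once the rank is positive, so some unit has $|\varepsilon|<1$; its powers send $|\varepsilon|^{4n}\to0$, so $\delta_g(\cC_{\cO_K})$ becomes arbitrarily small.

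The conceptually decisive step is the reduction $\det(X)\det(\sigma(X))=N_{K/F}(\det(X))$; everything after it is bookkeeping. The main obstacle is to track that this decisive quantity, after clearing denominators, lives in $\cO_F$ rather than $\cO_K$, because it is precisely the complex absolute value of a nonzero element of $\cO_F$ that is bounded below only when $F=\Q$ or $F$ is quadratic imaginary (Proposition~\ref{propA1}). This is also what forces the exponent $\sqrt{d}$ rather than $d$ in the quadratic imaginary case, through the identity $|N_{K/F}(b_d)|=\sqrt{N_{K/\Q}(b_d)}$, and it is what pins down exactly which base fields $F$ yield the generalized NVD property.
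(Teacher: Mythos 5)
Your proof is correct and follows essentially the same route as the paper's: reduce $\det(X)\det(\sigma(X))$ to $N_{K/F}(\det(X))$, clear denominators of $b$ to land in $\frac{1}{|N_{K/F}(b_d)|}\cO_F$, and then split into cases via $\cO_F\cap\R^+=\N$ (for $F=\Q$ or quadratic imaginary) versus Dirichlet's Unit Theorem. The only difference is that you make explicit the codeword $x_0=\varepsilon$, $x_1=0$ witnessing vanishing determinant in the positive-unit-rank case, which the paper leaves implicit.
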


\begin{proof} Write $b$ as a fraction $b=\frac{b_n}{b_d}$ with $b_n,b_d\in \cO_K$ (not necessarily unique) and $b_d\not=0$.
We have 
\begin{align*}
\delta_g(\mathcal{C}_{\mathcal{O}_K})&=\inf_{0\not=X \in \lambda(A)_{\mathcal{O}_K}}|\det(X) \det(\sigma(X))|\\
&=\inf_{0\not=X \in \lambda(A)_{\mathcal{O}_K}}|\det(X) \sigma(\det(X))|\\
&=\inf_{0\not=X \in\lambda(A)_{\mathcal{O}_K}}|N_{K/F}(\det(X))|\\
&=\mathop{\inf_{x_0,x_1\in \mathcal{O}_K}}_{(x_0,x_1)\not=(0,0)  }|N_{K/F}(N_{K/F}(x_0)-bN_{K/F}(x_1))|\\
&=\mathop{\inf_{x_0,x_1\in \mathcal{O}_K}}_{(x_0,x_1)\not=(0,0)  }  \frac{1} {|N_{K/F}(b_d)|}
|N_{K/F}(b_dN_{K/F}(x_0)-b_nN_{K/F}(x_1))|\\
& \in \frac{1}{|N_{K/F}(b_d)|} \cO_F \cap \R^+
\end{align*}
since $N_{K/F}(\mathcal{O}_K)\subset \mathcal{O}_F$. Taking $c= |N_{K/F}(b_d)|$ establishes the first part of the proposition.

Assume that $F=\Q$. Then $|N_{K/F}(b_d)|$ is a positive integer. Thus,   among all possible pairs
 $(b_n, b_d) \in \cO_K^2$ (with $b_d\not=0$) that satisfy $b=\frac{b_n}{b_d}$, 
we can choose a pair $(b_n,b_d)$ in such a way  that $|N_{K/F}(b_d)|$ is minimal . Furthermore, $\cO_F\cap \R^+ =\N$. We let $d= |N_{K/F}(b_d)|^2$ in this case.

Next assume that $F$ is  quadratic imaginary (i.e. $F=\Q(\sqrt{m})$ and  $m<0$). Then it follows
from Proposition~\ref{propA1} that $\cO_F \cap \mathbb{R}^+=\mathbb{N}$ and that $|N_{K/F}(b_d)|^2=N_{F/\Q}(N_{K/F}(b_d))$ is a positive integer. 
Thus,   among all possible pairs
 $(b_n, b_d) \in \cO_K^2$ (with $b_d\not=0$) that satisfy $b=\frac{b_n}{b_d}$, 
we can choose a pair $(b_n,b_d)$ in such a way that
 $|N_{K/F}(b_d)|^2$ is minimal. Let $d=|N_{K/F}(b_d)|^2$.

If $F$ is  not $\Q$ or not quadratic imaginary it follows from the Dirichlet Unit Theorem (Proposition~\ref{propA8}) that $\cO_F$ contains units $u$ such that $|u|^2$ is arbitrarily large or small,  so that  
$\delta_g(\mathcal{C}_{\mathcal{O}_K})$ can become arbitrarily small.
\end{proof}

\begin{remark}  
If $F=\mathbb{Q}$ or $F$ is  quadratic imaginary, the generalized minimum determinant  of $\cC_{\cO_K}$ is lower bounded by a positive constant and the generalized NVD property is satisfied. As a consequence the code will achieve the diversity-multiplexing gain trade-off, as explained in \cite[p.~5232]{L-M-M}.

 If we assume in addition that $\cO_K$ is a unique factorization domain (or, equivalently, a principal ideal domain; cf.  Appendix~\ref{AppA}), we can write $b$ as an irreducible fraction $b=b_n/b_d$ and $b_d$ will be unique up to multiplication by a unit. By the Dirichlet Unit Theorem the only units in $\cO_K$ are roots of unity, so that $ |N_{K/F}(b_d)|^2$ is unique.

For QAM constellations we take $F=\Q(i)$, so that $K=\Q(i)(\sqrt{m})$ for some square-free non-zero integer $m$. Proposition~\ref{propA6} lists the fields $K$ whose ring of integers $\cO_K$ is a unique factorization domain.
\end{remark}

In the setting of multiblock space-time codes it is again  natural to ask that 
$|b|^2=1$, cf. \cite[p.~5232]{L-M-M}.

\begin{example} Let $F=\Q(i)$ and
$K=\mathbb{Q}(i)(\sqrt{5})$. Let $\theta=\frac{1+\sqrt{5}}{2}$ be the golden number, $\sigma:\mathbb{Q}(i)(\sqrt{5})\to \mathbb{Q}(i)(\sqrt{5})$,
$\sigma(i)= i$, $\sigma(\sqrt{5})= -\sqrt{5}$, $b=\frac{i+\sqrt{5}}{i-\sqrt{5}}$ and
$A=\Cay( \mathbb{Q}(i)(\sqrt{5}),b)$
over $\mathbb{Q}(i)$ as in Example~\ref{ex7.3}. In order to obtain an energy-efficient code, we restrict
the entries in the codewords  to elements in the principal
ideal $I$ in $\mathcal{O}_K$,  generated by $\alpha=1+i-i\theta$. Then
\[ X=\frac{1}{\sqrt{5}}
\left [\begin {array}{cc}
\alpha(c+d\theta) & b\sigma(\alpha)(e+f\sigma(\theta)) \\
\alpha(e+f\theta) & \sigma(\alpha) (c+d\sigma(\theta))
\end {array}\right ]
\]
with $c,d,e,f\in \mathbb{Z}[i]$ and the code $\cC_{\cO_K}$ consists of  block matrices of the form
\[Y=\frac{1}{\sqrt{5}} \left [\begin {array}{cc|cc}
\alpha(c+d\theta) &  b\sigma(\alpha)(e+f\sigma(\theta)) &  \sigma(\alpha)(c+d\theta) &   \sigma(b)\alpha(e+f \theta)    \\
\alpha(e+f\theta) &  \sigma(\alpha) (c+d\sigma(\theta))   &   \sigma(\alpha)  (e+f\sigma(\theta)) &   \alpha (c+d \theta)\\
\end {array}\right ].
 \]
Note that
$$\delta_g(\mathcal{C}_{\mathcal{O}_K})\geq \frac{1}{|N_{K/F}(\sqrt{5}(i-\sqrt{5}   )  )|}=\frac{1}{30},$$
 guaranteeing that the code satisfies the generalized NVD property.
\end{example}

\begin{example} Replacing $b$ by $\frac{2i+\sqrt{5}}{3}$ (cf. Example~\ref{ex7.4})
in the previous example results in a code $\cC_{\cO_K}$ such that
$$\delta_g(\mathcal{C}_{\mathcal{O}_K})\geq \frac{1}{|N_{K/F}(3\sqrt{5}  )|}=\frac{1}{45},$$
 guaranteeing that the code satisfies the generalized NVD property.
\end{example}

\begin{example} Let  $F=\Q(i)$ and 
$K=\mathbb{Q}(i)(\sqrt{2})= \mathbb{Q}(i)(\zeta_8)$ where $\zeta_8=\frac{1+i}{\sqrt{2}}$
is an 8th root of unity as in Example~\ref{ex7.5}. Let $A=\Cay( \mathbb{Q}(\zeta_8),\zeta_8)$. Then
\[X=
\frac{1}{\sqrt{2}}
\left [\begin {array}{cc}
u_0+\zeta_8 w_0 &  \zeta_8(u_1-\zeta_8 w_1)\\
u_1+\zeta_8 w_1 & u_0-\zeta_8 w_0
\end {array}\right ]\]
with $ u_0, u_1,w_0,w_1\in \mathbb{Z}[i] $ and the code $\cC_{\cO_K}$ consists of block matrices of the form
\[Y=\frac{1}{\sqrt{2}} \left [\begin {array}{cc|cc}
u_0+\zeta_8 w_0 &  \zeta_8(u_1-\zeta_8 w_1) &  u_0-\zeta_8 w_0 &  -\zeta_8 
(u_1+\zeta_8 w_1)\\
u_1+\zeta_8 w_1 & u_0-\zeta_8 w_0   &    u_1-\zeta_8w_1 &  u_0+\zeta_8 w_0\\
\end {array}\right ].
 \]
We have
$$\delta_g(\mathcal{C}_{\mathcal{O}_K})\geq \frac{1}{|N_{K/F}((\sqrt{2})^2  )|}=\frac{1}{4},$$
 guaranteeing that the code satisfies the generalized NVD property. 
\end{example}

\section{$4\times 4$ Codebooks from nonassociative quaternion division algebras}
\label{sec9}

Let $\mathbb{H}=(-1,-1)_\mathbb{R}$ be Hamilton's quaternion division algebra.
Its left regular representation with respect to the  basis $\{1,i,j,-ij\}$  consists of matrices of the form
\[ \left [\begin {array}{crrr}
x_0 &  -x_1 & -x_2 & -x_3 \\
x_1 & x_0  &  x_3 & - x_2 \\
x_2 & - x_3 & x_0 & x_1 \\
x_3 & x_2  & -x_1 & x_0 \\
\end {array}\right ]
 \]
with $x_\ell\in\mathbb{R}$, $\ell=0,\ldots, 3$ (cf. \cite[Example~8]{S-R-S}). This is exactly the
four-dimensional real
orthogonal design from \cite[Section III-A]{T-J-C1}.

Let us look at the left regular representation of a nonassociative
quaternion algebra $A$, this time over its base field rather than over a maximal subfield.

\subsection{Fully diverse codebook construction}

Let $F$ be a number field and let
$K=F(\sqrt{a})=F(i)$ with $i^2=a\in F^\x$ be a quadratic field extension with non-trivial Galois automorphism $\sigma:\sqrt{a}\mapsto -\sqrt{a}$.
Let $A=\Cay(K,b)$ be a nonassociative quaternion division algebra over $F$ with
$b=p+qi\in K\setminus F$, so $p,q\in F$ with $q\not=0$.
For the basis  $\{1,i,j,-ij\}$ of $A$ over $F$ the matrix representation of left multiplication with $x=x_0+x_1i+x_2j-x_3ij$
 yields the fully diverse  $4\times 4$ space-time
block code
\[ \mathcal{C}=\lambda(A)=\left.\left\{\left [\begin {array}{cccc}
x_0 &  ax_1 & px_2 -aqx_3 & aqx_2-apx_3 \\
x_1 & x_0  & qx_2-p x_3 & p x_2-aqx_3 \\
x_2 & a x_3 & x_0 & -ax_1 \\
x_3 & x_2  & -x_1 & x_0 \\
\end {array}\right ]\,\right|\, x_0,x_1,x_2,x_3\in F \right\}.
\]

 \begin{example}\label{ex9.1} Let $i^2=-1$.
 The $\mathbb{R}$-algebra $A=\Cay(\mathbb{C},i)$  yields the fully diverse $4\times 4$ space-time block code
\[\left.\left\{\left [\begin {array}{crrr}
x_0 &  -x_1 & x_3 & -x_2 \\
x_1 & x_0  & x_2 & x_3 \\
x_2 & - x_3 & x_0 & x_1 \\
x_3 & x_2  & -x_1 & x_0 \\
\end {array}\right ]\,\right|\, x_0,x_1,x_2,x_3\in \mathbb{R}\right\}.
 \]
Its matrices are not orthogonal, but their first two column vectors and, respectively,
their last two, are orthogonal to each other.
 \end{example}

\subsection{Non-vanishing determinant} Let
 $X\in \mathcal{C}$. Then
\[\det(X)=
[(x_0^2-ax_1^2)-p(x_2^2-ax_3^2)]^2-aq^2(x_2^2-ax_3^2)^2 \in F.
\]
Since the codebook is based on a division algebra, its minimum determinant equals
$$\delta(\mathcal{C})=\inf_{0\not=X\in \mathcal{C}}|\det(X)|^2$$
and is non-zero. 
If the information symbols $x_0,x_1,x_2,x_3$ belong to a finite constellation in $F$,
 then $\delta(\mathcal{C})$ is bounded below by a constant which depends on the constellation size. If the constellation size increases, $\delta(\mathcal{C})$ can get arbitrarily close to zero. By restricting the entries in $\cC$ to the ring of integers $\cO_F$ we obtain for certain number fields $F$ infinite codes that satisfy the NVD property:

\begin{proposition}
 Let $F$ be a number field and let $K=F(\sqrt{a})$ for some nonzero square-free
$a\in \mathcal{O}_F$. Let $b=p+q\sqrt{a}\in K\setminus F$ with $p,q\in F$
$($so that $q\not=0$$)$. 
Let $\cC=\lambda(\Cay(K,b))$ and let $\mathcal{C}_{\mathcal{O}_F}$ denote the code  obtained from $\cC$
by restricting the elements of $F$ to elements of $\mathcal{O}_F$.  
Then there exists a constant $c>0$ such that 
\[\delta(\mathcal{C}_{\mathcal{O}_F})\in \frac{1}{c} \cO_F\cap \mathbb{R}^+.\]
If $F=\Q$ or $F$ is quadratic imaginary, then there exists an integer $d>0$ such that
\[\delta(\mathcal{C}_{\mathcal{O}_F})\geq \frac{1}{d}\]
$($and so $\mathcal{C}_{\mathcal{O}_F}$ satisfies the NVD property$)$, otherwise $\delta(\mathcal{C}_{\mathcal{O}_F})$ can become arbitrarily small.
\end{proposition}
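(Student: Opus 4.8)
The plan is to follow exactly the template established in the proofs of Proposition~\ref{prop6.2} and the $2\times 4$ multiblock proposition, since the structure of the argument is identical; only the determinant formula changes. First I would take a nonzero codeword $X\in\cC_{\cO_F}$, so that $x_0,x_1,x_2,x_3\in\cO_F$, and write $b=\frac{b_n}{b_d}$ with $b_n,b_d\in\cO_F$ and $b_d\neq0$ (note that here $b=p+q\sqrt a$ has $p,q\in F$, so it is the scalars $p,q$, equivalently $b$, that need clearing of denominators, whereas the entries $x_\ell$ already lie in $\cO_F$). Using the determinant formula computed just above the statement,
\[
\det(X)=\bigl[(x_0^2-ax_1^2)-p(x_2^2-ax_3^2)\bigr]^2-aq^2(x_2^2-ax_3^2)^2,
\]
I would observe that $\det(X)\in F$ and, after multiplying through to clear the denominator coming from $p$ and $q$, that $|b_d|^{2}\det(X)$ (or an appropriate power of $|b_d|$) lies in $\cO_F$. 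Since the code comes from a division algebra, $\det(X)\neq0$, so this shows
\[
\delta(\cC_{\cO_F})=\inf_{0\neq X\in\cC_{\cO_F}}|\det(X)|^2\in\tfrac{1}{c}\,\cO_F\cap\R^+
\]
for a suitable constant $c>0$ depending only on $b_d$. This establishes the first assertion.

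For the second assertion I would split into the same three cases as before. When $F=\Q$ or $F$ is quadratic imaginary, I would invoke Proposition~\ref{propA1} to conclude that $\cO_F\cap\R^+=\N$, so that any nonzero element of $\tfrac1c\cO_F\cap\R^+$ is at least $\tfrac1c$; choosing the representation $b=b_n/b_d$ so that the relevant norm of $b_d$ is minimal then yields an explicit integer $d>0$ with $\delta(\cC_{\cO_F})\geq\tfrac1d$, giving the NVD property. In the remaining case, where $F$ is neither $\Q$ nor quadratic imaginary, I would appeal to the Dirichlet Unit Theorem (Proposition~\ref{propA8}): $\cO_F$ then has infinitely many units, among which one finds units $u$ with $|u|^2$ arbitrarily small. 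Substituting such a unit into one of the symbols (for instance taking $x_0=u$, $x_1=x_2=x_3=0$, so that $\det(X)=u^{4}$ and $|\det(X)|^2=|u|^{8}$) drives the minimum determinant arbitrarily close to zero.

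The main obstacle, and the only place where genuine care is needed, is the bookkeeping of denominators in the determinant formula: unlike the $2\times2$ case where the determinant is simply $N_{K/F}(x_0)-bN_{K/F}(x_1)$ and $b$ enters linearly, here $b=p+q\sqrt a$ enters through both $p$ and $q$, and the determinant is a degree-four polynomial in the entries with $p$ appearing squared and $q$ appearing squared. I would need to verify that the power of $b_d$ required to clear all denominators and land in $\cO_F$ is uniform over all codewords, so that a single constant $c$ (and hence a single integer $d$ in the NVD cases) works for the whole infinite codebook; this amounts to checking that $p^2$, $pq$-type and $q^2$ contributions are all cleared by the same fixed denominator, which follows once $p$ and $q$ are written over a common denominator $b_d$. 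The argument that $\delta$ can be made arbitrarily small in the non-quadratic-imaginary case is identical in spirit to the earlier propositions and presents no new difficulty.
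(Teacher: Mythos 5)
Your proposal is correct and follows essentially the same route as the paper: the paper clears denominators of $p$ and $q$ separately (writing $p=p_n/p_d$, $q=q_n/q_d$ with $p_d,q_d\in\mathcal{O}_F$) and takes $c=|q_d|^4$ if $p=0$ and $c=|p_dq_d|^4$ otherwise --- the fourth power arising because $\det(X)$ is quadratic in $p,q$ and $\delta=|\det(X)|^2$ --- then runs the identical trichotomy ($F=\Q$, $F$ quadratic imaginary via Proposition~\ref{propA1}, otherwise Dirichlet units). The only slip is writing $b=b_n/b_d$ with $b_n\in\mathcal{O}_F$ (impossible since $b\notin F$; one needs $b_n\in\mathcal{O}_K$, or better, separate denominators for $p$ and $q$ as the paper does), but you essentially correct this yourself in the final paragraph, and your explicit witness $x_0=u$, $x_1=x_2=x_3=0$ for the vanishing-determinant case is a nice touch the paper leaves implicit.
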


\begin{proof} Write $p=\frac{p_n}{p_d}$, $q=\frac{q_n}{q_d}$  
with $p_n,p_d, q_n,q_d\in \mathcal{O}_F$ (not necessarily unique) and $p_d\not=0$, $q_d\not=0$.
 An easy calculation confirms that
$$
\delta(\mathcal{C}_{\mathcal{O}_F})\in \frac{1}{|q_d|^4} \mathcal{O}_F\cap \mathbb{R}^+ \text{ if } p=0,\quad
\delta(\mathcal{C}_{\mathcal{O}_F})\in \frac{1}{|p_dq_d|^4} \mathcal{O}_F\cap \mathbb{R}^+ \text{ if } p\not=0.$$
Letting $c=|q_d|^4$ if $p=0$ and  $c=|p_dq_d|^4$ if $p\not=0$ establishes the first part of the proposition.

Assume for the sake of argument that $p=0$. The case $p\not=0$ can be settled in a similar manner.

Assume that $F=\Q$.  Then $|q_d|^4$  is a positive integer. 
Thus,   among all possible pairs
 $(q_n, q_d) \in \cO_F^2=\Z^2$ (with $q_d\not=0$) that satisfy $q=\frac{q_n}{q_d}$, 
we can choose a pair $(q_n,q_d)$ in such a way that
 $|q_d|^4$ is minimal. Furthermore, $\cO_F\cap \R^+ =\N$.
We let $d=|q_d|^4$ in this case.

Next assume that  $F$ is  quadratic imaginary (i.e. $F=\Q(\sqrt{m})$ and  $m<0$). Then it follows from Proposition~\ref{propA1} that 
$\cO_F \cap \mathbb{R}^+=\mathbb{N}$ and that
$|q_d|^4=N_{F/\Q}(q_d)^2$ is a positive integer.
Thus,   among all possible pairs
 $(q_n, q_d) \in \cO_F^2$ (with $q_d\not=0$) that satisfy $q=\frac{q_n}{q_d}$, 
we can choose a pair $(q_n,q_d)$ in such a way that
$|q_d|^4$ is minimal. Let $d=|q_d|^4$. 

If $F$ is  not $\Q$ or not quadratic imaginary it follows from the Dirichlet Unit Theorem (Proposition~\ref{propA8}) that $\cO_F$ contains units $u$ such that $|u|^2$ is arbitrarily large or small,  so that  
$\delta(\mathcal{C}_{\mathcal{O}_F})$ can become arbitrarily small.
\end{proof}

\begin{remark}  
If $F=\mathbb{Q}$ or $F$ is  quadratic imaginary, the minimum determinant  of $\cC_{\cO_F}$ is lower bounded by a positive constant  
and the NVD property is satisfied. 

If we assume in addition that $\cO_F$ is a unique factorization domain (or, equivalently, a principal ideal domain; cf.  Appendix~\ref{AppA}), we can write 
$p$ and $q$ as  irreducible fractions $p=\frac{p_n}{p_d}$, $q=\frac{q_n}{q_d}$ 
and $p_d$, $q_d$ will be unique up to multiplication by a unit. By the Dirichlet Unit Theorem the only units in $\cO_F$ are roots of unity, so that $|q_d|^4$, resp. 
$|p_dq_d|^4$, is unique.
\end{remark}

 \begin{example} \label{ex9.2}
The $\mathbb{Q}$-algebra $A=\Cay(\mathbb{Q}(i),i)$ yields the fully diverse $4\times 4$
space-time block code
\[ \mathcal{C}_\mathbb{Z}=\left.\left\{\left [\begin {array}{cccc}
x_0 &  -x_1 & x_3 & -x_2 \\
x_1 & x_0  & x_2 & x_3 \\
x_2 & - x_3 & x_0 & x_1 \\
x_3 & x_2  & -x_1 & x_0 \\
\end {array}\right ]\,\right|\, x_0, x_1, x_2, x_3\in \mathbb{Z}\right\}.
 \]
We obtain
$$\delta(\mathcal{C}_{\mathbb{Z}})=
\mathop{\inf_{x_0, x_1, x_2, x_3\in \mathbb{Z}}}_{ 
(x_0, x_1, x_2, x_3)\not=(0,0,0,0)}
|(x_0^2+x_1^2)^2+(x_2^2+x_3^2)^2 |^2=1.$$
\end{example}

\begin{example}\label{ex9.2bis}
The $\mathbb{Q}$-algebra $A=\Cay(\mathbb{Q}(i),-i)$ yields the fully diverse $4\times 4$ space-time block code
\[ \mathcal{C}_\mathbb{Z}=\left.\left\{\left [\begin {array}{cccc}
x_0 &  -x_1 & -x_3 & x_2 \\
x_1 & x_0  & -x_2 & -x_3 \\
x_2 & - x_3 & x_0 & x_1 \\
x_3 & x_2  & -x_1 & x_0 \\
\end {array}\right ]\,\right|\, x_0, x_1, x_2, x_3\in \mathbb{Z}\right\},
 \]
 again with minimum determinant $1$.
 \end{example}

In the previous two examples the first two column vectors of any codeword  and, respectively, the last two, are orthogonal to each other.

 \begin{example} \label{ex9.3}
Consider the  $ \mathbb{Q}(i)$-algebra
$A=\Cay( \mathbb{Q}(i)(\zeta_8),\zeta_8)$  where
 $\zeta_8=\frac{1+i}{\sqrt{2}}$ is an 8th root of unity. Note that $a=\zeta_8^2=i$ and $b=\zeta_8$ (so $p=0$ and $q=1$).
 We obtain the fully diverse codebook
\[ \mathcal{C}_{\mathbb{Z}[i]}=\left.\left\{\left [\begin {array}{cccc}
x_0 &  ix_1 & -ix_3 & ix_2 \\
x_1 & x_0  &  x_2 &  -ix_3 \\
x_2 & i x_3 & x_0 & -ix_1 \\
x_3 & x_2  & -x_1 & x_0 \\
\end {array}\right ]\,\right|\, x_0, x_1, x_2, x_3\in \mathbb{Z}[i]\right\}
 \] 
 whose minimum determinant is
$$\delta(\mathcal{C}_{\mathbb{Z}[i]})=
\mathop{\inf_{x_0,x_1,x_2,x_3\in \mathbb{Z}[i]}}_{(x_0, x_1, x_2, x_3)\not=(0,0,0,0)}|(x_0^2-ix_1^2)^2-i(x_2^2-ix_3^2)^2|^2=1.$$

\end{example}

\subsection{Information lossless encoding}

For a (transposed) matrix
\[ X=
\left [\begin {array}{cccc}
x_0 &  x_1 & x_2 & x_3 \\
ax_1 & x_0  & a x_3 &  x_2 \\
 px_2 -aqx_3 & qx_2-p x_3  & x_0 & -x_1 \\
aqx_2-apx_3  &  p x_2-aqx_3  & -ax_1 & x_0 \\
\end {array}\right ]
\]
in $\cC$
we use the following encoding: let $I_4$ be the identity matrix and let
\[ \Gamma_1=
\left [\begin {array}{cccc}
0 & 1 & 0 & 0\\
a & 0 & 0 & 0 \\
0 & 0 & 0 & -1 \\
0 & 0 & -a & 0 \\
\end {array}\right],\
\Gamma_2=
\left [\begin {array}{cccc}
0 & 0 & 1 & 0\\
0 & 0 & 0 & 1 \\
p & q & 0 & 0 \\
aq & p & 0 & 0 \\
\end {array}\right ],\
\Gamma_3=
\left [\begin {array}{cccc}
0 & 0 & 0 & 1\\
0 & 0 & a & 0 \\
-aq & -p & 0 & 0 \\
-ap & -aq & 0 & 0 \\
\end {array}\right ].
\]
The codeword $X$ is encoded as
$$X=I_4 {\rm diag}(x_0)+\Gamma_1{\rm diag}(x_1)+\Gamma_2{\rm diag}(x_2) +
\Gamma_3{\rm diag}(x_3),$$
where, for $\ell=0,\ldots,3$,
\[ {\rm diag}(x_\ell)=
\left [\begin {array}{cccc}
x_\ell & 0 & 0 & 0\\
0 & x_\ell & 0 & 0 \\
0 & 0 & x_\ell & 0 \\
0 & 0 & 0 & x_\ell \\
\end {array}\right ].
\]
The matrix $\Gamma_3$ is unitary if and only if $|a|^2=1$, $ aq\bar{p}+p\bar{q}=0$ and $|p|^2+|q|^2=1$. The matrix
$\Gamma_2$ is unitary if and only if $|p|^2+|q|^2=1$, $|a|^2|q|^2+|p|^2=1$ and $ aq\bar{p}+p\bar{q}=0$.
The matrix $\Gamma_1$ is unitary if and only if $|a|^2=1$.

Thus,  $\cC$ is  information lossless  
if $|a|^2=|p|^2+|q|^2=1$ and $p\bar q+aq\bar p=0$.

It is not difficult to verify that the codes in Examples~\ref{ex9.2}, \ref{ex9.2bis} and \ref{ex9.3} 
are all information lossless.

\appendix

\section{Facts from Number Theory}\label{AppA}

In this appendix we collect some results from algebraic number theory for the convenience of the reader.

Let $K$ be a number field.  The ring of integers $\cO_K$ of $K$ is a Dedekind domain \cite[I(3.1)]{Neu}.

Let $d_K$ denote the discriminant of $K$.

\begin{proposition}[{\cite[p.15]{Neu}}] \label{propA1}
Let $m\not=0$  be a square-free integer  and let $K=\Q(\sqrt{m})$. Then
\[d_K=\begin{cases}
4m & \text{if } m\equiv 2,3 \bmod 4\\
m &\text{if } m\equiv 1 \bmod 4
\end{cases}.\]
An integral basis of $K$ is given by $\{1,\sqrt{m}\}$ in the first case, by $\{1, \frac{1}{2} (1+\sqrt{m})\}$ in the second case and by  $\{1, \frac{1}{2} (m+\sqrt{m})\}$ in both cases. Thus
\[\cO_K=\begin{cases}
\Z[\sqrt{m}] & \text{if } m\equiv 2,3 \bmod 4\\
\Z\bigl[\frac{1+\sqrt{m}}{2}\bigr] &\text{if } m\equiv 1 \bmod 4
\end{cases}.\]
\end{proposition}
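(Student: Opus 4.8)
The final statement is Proposition~\ref{propA1}, the classical computation of the ring of integers and discriminant of a quadratic field $K=\Q(\sqrt m)$. Let me sketch how I would prove it.

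My plan is to first identify exactly which algebraic integers live in $K$. Every element of $K$ can be written as $\alpha = r + s\sqrt{m}$ with $r,s \in \Q$. The strategy is to characterize when such an $\alpha$ is integral by using the fact that $\alpha$ is an algebraic integer if and only if its minimal polynomial over $\Q$ has integer coefficients. For $\alpha \notin \Q$ (the case $s \neq 0$) the minimal polynomial is the characteristic polynomial
\[
x^2 - (\alpha + \bar\alpha)x + \alpha\bar\alpha = x^2 - 2r\,x + (r^2 - m s^2),
\]
where $\bar\alpha = r - s\sqrt m$ is the Galois conjugate. So $\alpha \in \cO_K$ precisely when the trace $2r \in \Z$ and the norm $r^2 - m s^2 \in \Z$.

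Next I would solve these two integrality conditions. Writing $2r = t \in \Z$, the norm condition becomes $t^2 - 4ms^2 \in 4\Z$, and a short argument shows $2s$ must also be an integer; set $2s = w$. Then the surviving constraint is $t^2 \equiv m w^2 \pmod 4$. Here the case split on $m \bmod 4$ enters: since $m$ is square-free it is never $\equiv 0$, so $m \equiv 1, 2, 3 \pmod 4$. Analyzing $t^2 \equiv m w^2 \pmod 4$ using that squares are $0$ or $1$ mod $4$ shows that when $m \equiv 2,3 \pmod 4$ one is forced to have $t,w$ both even, i.e. $r,s \in \Z$, giving $\cO_K = \Z[\sqrt m]$; whereas when $m \equiv 1 \pmod 4$ one may also allow $t,w$ both odd, which yields the extra elements of the form $\tfrac12(t + w\sqrt m)$ with $t \equiv w \pmod 2$, giving $\cO_K = \Z\bigl[\tfrac{1+\sqrt m}{2}\bigr]$. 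From the explicit integral basis in each case, the discriminant follows immediately by computing $d_K = \det(G)^2$ where $G$ is the matrix of the two embeddings applied to the basis: for $\{1,\sqrt m\}$ this gives $(2\sqrt m)^2 = 4m$, and for $\{1,\tfrac12(1+\sqrt m)\}$ it gives $(\sqrt m)^2 = m$.

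The step requiring the most care is the modular analysis of $t^2 \equiv m w^2 \pmod 4$, since one must verify both that the claimed elements are integral \emph{and} that no others are, i.e. that the stated rings are the full $\cO_K$ and not merely subrings of it. This is where square-freeness of $m$ is used to exclude $m \equiv 0 \pmod 4$, and where one must be careful that the conditions are necessary and sufficient rather than just sufficient. The remaining claim, that $\{1,\tfrac12(m+\sqrt m)\}$ is an integral basis in \emph{both} cases, is a routine verification: when $m\equiv 2,3 \pmod 4$ this basis is related to $\{1,\sqrt m\}$ by a unimodular change (since $\tfrac12(m+\sqrt m) = \tfrac{m}{2} + \tfrac12\sqrt m$ is not integral — so in fact this element lies in $\cO_K$ only when $m \equiv 1$, and the stated ``both cases'' claim must be read as expressing $\cO_K$ via the basis $\{1,\omega\}$ with $\omega = \tfrac12(d_K + \sqrt{d_K})$ after substituting the correct $d_K$), so I would double-check the precise formulation against \cite[p.~15]{Neu} and reproduce the unimodular-transformation argument to confirm the discriminant is basis-independent.
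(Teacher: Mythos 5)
The paper offers no proof of Proposition~\ref{propA1}: it is quoted verbatim with a citation to \cite[p.~15]{Neu}, so there is no in-paper argument to compare against. Your sketch is the standard and correct derivation: characterizing $\alpha=r+s\sqrt m\in\cO_K$ by the integrality of the trace $2r$ and the norm $r^2-ms^2$, deducing $2s\in\Z$ from square-freeness of $m$, reducing to $t^2\equiv mw^2\pmod 4$, and splitting on $m\bmod 4$ (squares being $0$ or $1$ mod $4$ forces $t,w$ both even when $m\equiv 2,3$, and allows $t\equiv w\pmod 2$ when $m\equiv 1$); the discriminants $4m$ and $m$ then follow from $d_K=\det(\sigma_i(\omega_j))^2$ for the respective bases. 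You are also right to flag the ``both cases'' clause: as literally stated, $\tfrac12(m+\sqrt m)$ has norm $\tfrac{m(m-1)}{4}\notin\Z$ when $m\equiv 2,3\pmod 4$, so it is not even an algebraic integer in that case. The uniform statement in \cite[p.~15]{Neu} uses $\tfrac12\bigl(d_K+\sqrt{d_K}\bigr)$ (which equals $2m+\sqrt m$ when $d_K=4m$, generating $\Z[\sqrt m]$, and $\tfrac12(m+\sqrt m)$ when $d_K=m$); the proposition as printed has substituted $m$ for $d_K$, which is a misprint rather than a gap in your argument. Since this clause is never used elsewhere in the paper (only the case-split description of $\cO_K$ and the values of $d_K$ are invoked, e.g.\ in Propositions~\ref{propA5} and \ref{prop6.2}), the error is harmless, and your proof of the substantive content is complete.
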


Let $h_K$ denote the class number of $K$, then $h_K=1$ if and only if $\cO_K$ is a principal ideal domain \cite[I,\S6]{Neu} if and only if $\cO_K$ is a unique factorization domain (since $\cO_K$ is a Dedekind domain \cite[Prop. 3.18]{Mil}).

\begin{proposition}[{\cite[p. 48]{Mil}}] Let $m$ be a positive square-free integer and let $K=\Q(\sqrt{-m})$. Then $h_K=1$ if and only if $m\in\{1,2,3,7,11,19,43,67,163\}$.
\end{proposition}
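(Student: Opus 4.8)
The plan is to prove the two implications separately, since they are of vastly different depth.

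For the \emph{forward} direction---that each of the nine listed values yields $h_K=1$---I would argue by a direct finiteness computation using the Minkowski bound. For an imaginary quadratic field $K=\Q(\sqrt{-m})$ one has degree $2$ with a single complex place, so the Minkowski constant specializes to $M_K=\frac{2}{\pi}\sqrt{|d_K|}$, where $d_K$ is read off from $m$ via Proposition~\ref{propA1}. Every ideal class contains an integral ideal of norm at most $M_K$, so the class group is generated by the prime ideals lying above the rational primes $p\le M_K$. For each of the nine values $M_K$ is small (for $m=163$, $d_K=-163$ and $M_K\approx 8.1$, so only $p\in\{2,3,5,7\}$ occur), and I would check, using the Kronecker symbol $\left(\frac{d_K}{p}\right)$, that every such $p$ is \emph{inert} in $K$. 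An inert prime contributes no integral ideal of norm $\le M_K$ other than $\cO_K$ itself, whence every ideal class is trivial and $h_K=1$. This step is entirely mechanical once the splitting behaviour of the small primes is tabulated, and it is closely related to the classical fact that $x^2+x+41$ is prime for $x=0,\dots,39$.

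The \emph{converse}---that no other $m$ occurs---is the celebrated Gauss class number one problem, and this is where essentially all the difficulty lies. Here I would invoke the theory of complex multiplication: if $h_K=1$ then the Hilbert class field of $K$ equals $K$, so the singular modulus $j(\tau)$ at the CM point $\tau=\frac{1+\sqrt{-m}}{2}$ (or $\tau=\sqrt{-m}$ when $m\equiv 1,2\bmod 4$) is a \emph{rational} algebraic integer, hence lies in $\Z$. Following the Heegner--Stark approach, one exploits modular relations between $j$ and the Weber functions to convert the integrality of $j(\tau)$ into a Diophantine equation in a single variable, which can be solved completely, thereby bounding $m$ and enumerating the solutions. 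Alternatively, Baker's theorem on linear forms in logarithms furnishes an effective lower bound for the relevant transcendental quantity and forces $m$ below an explicit, checkable bound.

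The main obstacle is squarely this converse. The forward direction is a finite verification that I would carry out in full, but the converse resists all elementary methods: Siegel's lower bound for $L(1,\chi)$ yields only an \emph{ineffective} bound on $m$, and ruling out a hypothetical tenth field requires either the transcendence machinery of Baker or the delicate modular-function identities of Heegner and Stark. Accordingly, my plan would be to establish the forward direction explicitly and to set up the complex-multiplication reduction that turns the converse into the known Diophantine analysis, and then---exactly as the paper does---to appeal to the Baker--Heegner--Stark theorem for the final enumeration rather than reproduce that deep argument in full.
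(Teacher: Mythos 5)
The paper gives no proof of this proposition at all: it appears in Appendix~\ref{AppA} purely as a citation to Milne, and ultimately rests on the Baker--Heegner--Stark resolution of Gauss's class number one problem. Your proposal is therefore strictly more detailed than what the paper does, and it is correct. The forward direction as you outline it is sound: for each of the nine values the Minkowski bound $\frac{2}{\pi}\sqrt{|d_K|}$ is largest at $m=163$, where it is about $8.13$, and a check of the Kronecker symbol $\bigl(\tfrac{d_K}{p}\bigr)$ confirms that every rational prime below the bound is inert (and no ramified prime ever falls below the bound for these nine discriminants), so the class group is generated by nothing and is trivial. Your treatment of the converse is the honest one: it is the genuinely deep half, Siegel's bound is ineffective, and any complete argument must pass through Heegner--Stark's modular-function identities or Baker's theorem on linear forms in logarithms; deferring to that literature is exactly what the paper does implicitly by citing Milne for the whole statement. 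The only difference between your route and the paper's is that you supply the finite-verification half explicitly and cite only the hard half, which buys a self-contained easy direction at no cost; the paper cites everything.
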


For an extension of number fields $K/F$, let $d_{K/F}$ denote the relative discriminant of $K$ over $F$.

\begin{proposition}[{\cite[p. 443]{Hasse}}] \label{propA3}
Let $L\supset K \supset F$ be a chain of number fields, then
\[d_{L/F}=N_{K/F}(d_{L/K})d^n_{K/F},\]
where $n=[K:F]$.
\end{proposition}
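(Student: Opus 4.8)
The plan is to derive this tower formula for relative discriminants from the transitivity of the relative \emph{different}, which is the route underlying Hasse's treatment. I would work throughout with discriminants and differents as ideals---$d_{E/B}$ as an ideal of $\cO_B$ and the different $\mathcal{D}_{E/B}$ as an ideal of $\cO_E$, for a pair of number fields $E\supset B$---and use the fundamental relation $d_{E/B}=N_{E/B}(\mathcal{D}_{E/B})$ connecting the two. The numerical statement then follows by passing to generators.

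First I would assemble three standard ingredients. (i) \emph{Transitivity of the different}: $\mathcal{D}_{L/F}=\mathcal{D}_{L/K}\cdot(\mathcal{D}_{K/F}\cO_L)$, where $\mathcal{D}_{K/F}\cO_L$ denotes the ideal of $\cO_L$ generated by the different of $K/F$. (ii) \emph{Transitivity of the ideal norm}: $N_{L/F}=N_{K/F}\circ N_{L/K}$. (iii) \emph{Norm of an extended ideal}: for an ideal $I$ of $\cO_K$ one has $N_{L/K}(I\cO_L)=I^{\,n}$, the exponent $n$ being the field degree $n$ recorded in the statement.

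Next I would put these together. Applying $N_{L/F}$ to (i) and using the multiplicativity of the norm gives $d_{L/F}=N_{L/F}(\mathcal{D}_{L/K})\cdot N_{L/F}(\mathcal{D}_{K/F}\cO_L)$. For the first factor, (ii) together with the discriminant--different relation yields $N_{L/F}(\mathcal{D}_{L/K})=N_{K/F}\bigl(N_{L/K}(\mathcal{D}_{L/K})\bigr)=N_{K/F}(d_{L/K})$, which is exactly the first term on the right-hand side. For the second factor, (ii) and (iii) give $N_{L/F}(\mathcal{D}_{K/F}\cO_L)=N_{K/F}\bigl(\mathcal{D}_{K/F}^{\,n}\bigr)=N_{K/F}(\mathcal{D}_{K/F})^{\,n}=d_{K/F}^{\,n}$, with $n$ the exponent from the statement. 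Multiplying the two factors produces the claimed identity $d_{L/F}=N_{K/F}(d_{L/K})\,d_{K/F}^{\,n}$.

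The main obstacle is ingredient (i): it is the only step that is not a formal manipulation of norms, and the exponent appearing in (iii) is tied to it. I would establish (i) prime by prime, localizing and completing $\cO_L$ at each of its primes, so that the question reduces to finite separable extensions of local fields, where the transitivity of the different follows from the chain rule $\mathcal{D}_{L/F}=\mathcal{D}_{L/K}\,\mathcal{D}_{K/F}$ for a tower of complete discrete valuation rings. Steps (ii) and (iii) are then routine consequences of the definition of the ideal norm and its behaviour on extended ideals. Since the proposition is quoted from Hasse, one may alternatively simply cite that reference and regard the argument above as the structure behind it.
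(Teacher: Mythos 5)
Your route---transitivity of the different, the relation $d_{E/B}=N_{E/B}(\mathcal{D}_{E/B})$, and formal norm manipulations---is the standard proof of the discriminant tower formula, and since the paper offers no proof of Proposition~\ref{propA3} (it is simply quoted from Hasse), it is the right thing to attempt. However, your ingredient (iii) is false as stated, and the error sits exactly at the sensitive point. For an ideal $I$ of $\cO_K$ one has $N_{L/K}(I\cO_L)=I^{[L:K]}$: the exponent is the degree of the \emph{upper} extension $L/K$, not ``the field degree $n$ recorded in the statement'', which the proposition defines as $n=[K:F]$. Carried out correctly, your computation yields
\[
d_{L/F}=N_{K/F}(d_{L/K})\,d_{K/F}^{[L:K]},
\]
i.e.\ the exponent must be $[L:K]$. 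This differs from the displayed identity whenever $[L:K]\neq[K:F]$; a quick sanity check is the degenerate chain $L=K$ (with $K\neq F$), where $d_{L/K}=(1)$ and the corrected formula reads $d_{K/F}=d_{K/F}$, while the formula with exponent $[K:F]$ would force $d_{K/F}=d_{K/F}^{[K:F]}$, already false for $K=\Q(i)$, $F=\Q$.

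In fact the exponent $n=[K:F]$ in the paper's statement appears to be a typo for $[L:K]$ (harmless there, because its only use, in the proof of Proposition~\ref{propA5}, has $[L:K]=[K:F]=2$, so the two exponents coincide). Your write-up silently inherits this typo by decreeing that the exponent produced in (iii) equals the statement's $n$; as a result the argument does not prove the identity as printed---and cannot, since that identity is false in general. To repair the proof, state (iii) with exponent $[L:K]$ and conclude the tower formula with $d_{K/F}^{[L:K]}$; the rest of your argument (transitivity of the different, transitivity of the ideal norm, the different--discriminant relation, and the reduction of the transitivity of the different to the local case) is correct and needs no change.
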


Let $i=\sqrt{-1}$. We collect some useful facts about quadratic extensions of $\Q(i)$.

\begin{proposition}[{\cite[Satz 2.1]{Schmal}}]  \label{propA4}
Let $m\not=0$  be a square-free integer  and let $K=\Q(i)(\sqrt{m})$. Then
\[d_K=\begin{cases}
16m^2 & \text{if } m\equiv 1,3 \bmod 4\\
64m^2 &\text{if } m\equiv 2 \bmod 4
\end{cases}.\]
\end{proposition}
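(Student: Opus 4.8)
The plan is to exploit the fact that, provided $m\neq\pm 1$ so that $K/\Q$ is a genuine degree-$4$ extension, $K=\Q(i)(\sqrt m)=\Q(i,\sqrt m)$ is a biquadratic field: it is Galois over $\Q$ with group $\Z/2\times\Z/2$, the nontrivial automorphisms being $i\mapsto -i$ and $\sqrt m\mapsto -\sqrt m$. Its three quadratic subfields are $\Q(i)=\Q(\sqrt{-1})$, $\Q(\sqrt m)$ and $\Q(\sqrt{-m})$, each the fixed field of one of the three order-$2$ subgroups. First I would record this structure, since the entire computation of $d_K$ will reduce to the discriminants of these three subfields.

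The main tool is the conductor--discriminant formula for the abelian extension $K/\Q$. Writing the characters cutting out $K$ as $\chi_0,\chi_1,\chi_2,\chi_3$, one has $|d_K|=\prod_{j} f_{\chi_j}$, where $f_{\chi_j}$ is the conductor of $\chi_j$. The trivial character $\chi_0$ contributes $f_{\chi_0}=1$, and $\chi_1,\chi_2,\chi_3$ are the quadratic characters attached to $\Q(i)$, $\Q(\sqrt m)$ and $\Q(\sqrt{-m})$; each is the primitive Kronecker symbol modulo the absolute value of the discriminant of its field, so its conductor equals $|d_{\Q(i)}|$, $|d_{\Q(\sqrt m)}|$, $|d_{\Q(\sqrt{-m})}|$ respectively. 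Hence
\[ |d_K| = |d_{\Q(i)}|\cdot|d_{\Q(\sqrt m)}|\cdot|d_{\Q(\sqrt{-m})}|. \]

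Now I would feed in Proposition~\ref{propA1}. It gives $d_{\Q(i)}=-4$ (as $-1\equiv 3\bmod 4$), while $d_{\Q(\sqrt m)}$ and $d_{\Q(\sqrt{-m})}$ are read off from the residue of $m$ modulo $4$; note that square-freeness rules out $m\equiv 0\bmod 4$, leaving exactly the cases $m\equiv 1,2,3\bmod 4$. A short check
\[ d_{\Q(\sqrt m)}=\begin{cases} m & m\equiv 1,\\ 4m & m\equiv 2,3,\end{cases}\qquad d_{\Q(\sqrt{-m})}=\begin{cases} -4m & m\equiv 1,2,\\ -m & m\equiv 3,\end{cases} \]
(using that $-m\equiv 3,2,1$ when $m\equiv 1,2,3$) then yields $|d_K|=4\cdot m\cdot 4m=16m^2$ for $m\equiv 1$, $|d_K|=4\cdot 4m\cdot m=16m^2$ for $m\equiv 3$, and $|d_K|=4\cdot 4m\cdot 4m=64m^2$ for $m\equiv 2$, matching the stated values up to sign.

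Finally I would fix the sign: since $i\in K$ the field has no real embedding, so its signature is $(0,2)$ and $\operatorname{sign}(d_K)=(-1)^{r_2}=(-1)^2=+1$, giving $d_K=16m^2$ or $64m^2$ as claimed. The step I expect to be the main obstacle is the justification of the conductor--discriminant reduction, in particular identifying the conductor of each quadratic character with the absolute value of its field discriminant. If one prefers to stay strictly within the results quoted in this appendix, an alternative is to run the tower $\Q\subset\Q(i)\subset K$ through Proposition~\ref{propA3}, which gives $d_{K/\Q}=N_{\Q(i)/\Q}(d_{K/\Q(i)})\cdot d_{\Q(i)/\Q}^{\,2}$; there the delicate point is the local computation of the relative discriminant $d_{K/\Q(i)}$ at the prime $(1+i)$ above $2$, which is precisely where the extra factor of $16$ in the $m\equiv 2$ case originates.
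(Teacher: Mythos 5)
The paper itself offers no proof of this proposition: it is quoted directly from Schmal (Satz 2.1) and used as a black box, so there is no in-paper argument to compare yours against. Your blind proof is nevertheless correct and self-contained. The reduction of $|d_K|$ to the product $|d_{\Q(i)}|\cdot|d_{\Q(\sqrt m)}|\cdot|d_{\Q(\sqrt{-m})}|$ via the conductor--discriminant formula for the biquadratic extension $K/\Q$ is the standard route; the case-by-case evaluation using Proposition~\ref{propA1} checks out in all three residue classes (and works for negative $m$ as well, since the pair of residues of $m$ and $-m$ modulo $4$ is $\{1,3\}$ when $m$ is odd and $\{2,2\}$ when $m$ is even, so the answer depends only on $|m|$); and the sign is correctly pinned down by $\mathrm{sign}(d_K)=(-1)^{r_2}=+1$ for a totally imaginary quartic field. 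Two small points. First, you are right to exclude $m=\pm 1$: there $K=\Q(i)$ and the stated formula fails ($d_K=-4$, not $16$), so the proposition implicitly assumes $[K:\Q]=4$. Second, the identification of the conductor of the quadratic character cutting out $\Q(\sqrt d)$ with $|d_{\Q(\sqrt d)}|$ is the one external input you lean on; it is standard (the character is the primitive Kronecker symbol modulo the field discriminant) but should be cited explicitly if this argument were to replace the reference to Schmal. Your proposed alternative via Proposition~\ref{propA3} and a local computation of $d_{K/\Q(i)}$ at the prime above $2$ would also work and is closer in spirit to how the paper derives Proposition~\ref{propA5} from the present statement.
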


\begin{proposition}\label{propA5} 
Let $m\not=0$  be a square-free integer  and let $K=\Q(i)(\sqrt{m})$. Then there exists a relative integral basis of $K$ over $\Q(i)$. Furthermore,
\[|d_{K/\Q(i)}|=\frac{1}{4} \sqrt{d_K}.\]
In particular,
\[|d_{K/\Q(i)}|=\begin{cases}
|m| & \text{if } m\equiv 1,3 \bmod 4\\
2|m| &\text{if } m\equiv 2 \bmod 4
\end{cases}.\]
\end{proposition}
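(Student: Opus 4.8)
The plan is to prove Proposition~\ref{propA5} in two stages, deducing the relative discriminant formula from the absolute discriminant computations already available, and then reading off the explicit cases. The key inputs are the conductor-discriminant/tower formula (Proposition~\ref{propA3}) applied to the chain $K \supset \Q(i) \supset \Q$, together with the known absolute discriminants $d_K$ (Proposition~\ref{propA4}) and $d_{\Q(i)} = -4$ (a standard fact, which is the case $m=-1$ of Proposition~\ref{propA1}, since $-1 \equiv 3 \bmod 4$ gives $d_{\Q(i)} = 4\cdot(-1) = -4$).

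First I would invoke Proposition~\ref{propA3} with $L=K$, the middle field $\Q(i)$, and base $F=\Q$, so that $n=[\Q(i):\Q]=2$. This yields
\[
d_{K/\Q} = N_{\Q(i)/\Q}\bigl(d_{K/\Q(i)}\bigr)\, d_{\Q(i)/\Q}^{\,2}.
\]
Here $d_{K/\Q}=d_K$ and $d_{\Q(i)/\Q}=d_{\Q(i)}=-4$, so $d_{\Q(i)/\Q}^{\,2}=16$. Since $K/\Q(i)$ is a quadratic extension generated by $\sqrt{m}$ with $m\in\Z$, the relative discriminant $d_{K/\Q(i)}$ is (up to the square of an ideal) represented by an element of $\Q(i)$, and taking absolute values of the displayed identity gives
\[
|d_K| = \bigl|N_{\Q(i)/\Q}(d_{K/\Q(i)})\bigr|\cdot 16.
\]
For an element $z\in\Q(i)$ one has $|N_{\Q(i)/\Q}(z)| = |z|^2$, so if I write the relative discriminant as a complex number I obtain $|d_{K/\Q(i)}|^2 = |d_K|/16$, i.e. $|d_{K/\Q(i)}| = \tfrac14\sqrt{d_K}$, which is exactly the asserted formula. (Here $\sqrt{d_K}$ means the positive square root of the positive integer $|d_K|$; since Proposition~\ref{propA4} shows $d_K$ is a perfect square times a unit, $\sqrt{d_K}$ is a rational integer.)

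Second, substituting the two cases of Proposition~\ref{propA4} into $|d_{K/\Q(i)}|=\tfrac14\sqrt{|d_K|}$ gives the explicit table directly: when $m\equiv 1,3\bmod 4$ we have $|d_K|=16m^2$, hence $\sqrt{|d_K|}=4|m|$ and $|d_{K/\Q(i)}|=|m|$; when $m\equiv 2\bmod 4$ we have $|d_K|=64m^2$, hence $\sqrt{|d_K|}=8|m|$ and $|d_{K/\Q(i)}|=2|m|$. The existence of a relative integral basis of $K$ over $\Q(i)$ is the remaining claim; I would justify it by noting that $\Q(i)$ has class number one, so $\cO_{\Q(i)}=\Z[i]$ is a principal ideal domain, and a finitely generated torsion-free module over a PID (here $\cO_K$ as an $\cO_{\Q(i)}$-module) is free, of rank $[K:\Q(i)]=2$; concretely $\{1,\sqrt{m}\}$ or $\{1,\tfrac{1+\sqrt{m}}{2}\}$ will serve.

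The main obstacle is being careful about what $|d_{K/\Q(i)}|$ even means, since the relative discriminant is properly an ideal of $\cO_{\Q(i)}$ rather than a number; the clean reduction above works precisely because $\Q(i)$ is a PID, so the relative discriminant ideal has a generator well-defined up to units, and its absolute value (the square root of the norm of the ideal) is unambiguous. I would make this normalization explicit at the outset so that taking $N_{\Q(i)/\Q}$ and passing to absolute values in the tower formula is rigorous rather than merely formal. Once that interpretation is fixed, the proof is a short substitution and the only arithmetic to check is the two elementary square-root simplifications above.
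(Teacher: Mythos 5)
Your derivation of the discriminant formula is essentially the paper's own argument: both apply the tower formula of Proposition~\ref{propA3} to $K\supset\Q(i)\supset\Q$, use $d_{\Q(i)}=-4$ from Proposition~\ref{propA1}, identify $|N_{\Q(i)/\Q}(d_{K/\Q(i)})|$ with $|d_{K/\Q(i)}|^2$, and then substitute the values of $d_K$ from Proposition~\ref{propA4}. Your explicit care about the relative discriminant being an ideal of the principal ideal domain $\Z[i]$, so that $|d_{K/\Q(i)}|$ is well defined as the absolute value of a generator, is a welcome clarification that the paper leaves implicit.

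Where you genuinely diverge is the existence of a relative integral basis: the paper simply cites \cite[Satz 4.2a]{Schmal}, whereas you derive it from the fact that $\cO_K$ is a finitely generated torsion-free module over the PID $\Z[i]$, hence free of rank $2$. That argument is correct and more self-contained. However, your parenthetical claim that ``concretely $\{1,\sqrt{m}\}$ or $\{1,\tfrac{1+\sqrt{m}}{2}\}$ will serve'' is false in general: for $m=2$ one has $\cO_K=\Z[\zeta_8]$ with $\zeta_8=\tfrac{(1+i)\sqrt{2}}{2}\notin\Z[i]+\Z[i]\sqrt{2}$, and for $m\equiv 3\bmod 4$ (e.g.\ $m=3$, where $\zeta_{12}=\tfrac{i+\sqrt{3}}{2}$ is integral) the element $\tfrac{1+\sqrt{m}}{2}$ is not even an algebraic integer. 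Since the proposition only asserts existence, this slip does not damage the proof, but you should delete or correct the concrete claim.
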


\begin{proof} The existence of a relative integral basis follows from \cite[Satz 4.2a]{Schmal}.  Consider the chain $K\supset \Q(i)\supset \Q$.
From Proposition~\ref{propA1} it follows that $d_{\Q(i)}=-4$. Thus,
$d_K=|d_{K/\Q(i)}|^2(-4)^2$ by Proposition~\ref{propA3}, which shows that $|d_{K/\Q(i)}|=\frac{1}{4} \sqrt{d_K}$. We conclude with Proposition~\ref{propA4}.
\end{proof}

\begin{proposition}[{\cite[pp. 915--916]{Yam}}] \label{propA6}
Let $m$ be a positive square-free integer and let 
$K=\Q(i)(\sqrt{m})$. Then $h_K=1$ if and only if $m\in\{2,3,5,7,11,13,19,37,43,67,163\}$.
\end{proposition}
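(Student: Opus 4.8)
The plan is to reduce the computation of $h_K$ to the class numbers of the quadratic subfields of $K$ by means of Kuroda's class number formula, and then to read off the list from the classical determinations of those class numbers. Since $m>1$ is square-free, $K=\Q(i,\sqrt{m})$ is a genuine field of degree $4$ with $\mathrm{Gal}(K/\Q)\cong(\Z/2\Z)^2$; its three quadratic subfields are $\Q(i)$, the real field $\Q(\sqrt{m})$, and the imaginary field $\Q(\sqrt{-m})$. As exactly two of these are imaginary, $K$ is an imaginary biquadratic field, and Kuroda's formula in this case reads
\[
h_K=\frac{Q}{2}\,h_{\Q(i)}\,h_{\Q(\sqrt{m})}\,h_{\Q(\sqrt{-m})},
\qquad
Q=[\,\cO_K^\times:\cO_{\Q(i)}^\times\,\cO_{\Q(\sqrt{m})}^\times\,\cO_{\Q(\sqrt{-m})}^\times\,]\in\{1,2\}.
\]
Because $\cO_{\Q(i)}=\Z[i]$ is a principal ideal domain, $h_{\Q(i)}=1$, and the whole problem collapses to the single arithmetic condition
\[
h_K=1\iff Q\cdot h_{\Q(\sqrt{m})}\cdot h_{\Q(\sqrt{-m})}=2.
\]

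First I would cut the candidate set down to a finite list. By the Brauer--Siegel theorem $h_{\Q(\sqrt{-m})}\to\infty$ as $m\to\infty$, so $h_K=1$ forces $h_{\Q(\sqrt{-m})}\le 2$; the imaginary quadratic fields of class number at most two form an explicit finite set (the class number one problem, solved by Heegner, Baker and Stark, recalled above, together with the class number two problem). Concretely the relevant $m$ are $\{2,3,7,11,19,43,67,163\}$ with $h_{\Q(\sqrt{-m})}=1$ and the finitely many values with $h_{\Q(\sqrt{-m})}=2$ (among them $5,6,13,22,37,\dots$); the value $m=1$ is discarded at once since there $K=\Q(i)$ is not biquadratic.

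The case split is then immediate. If $h_{\Q(\sqrt{-m})}=1$ the condition becomes $Q\cdot h_{\Q(\sqrt{m})}=2$; for all eight such $m$ one checks directly that the real field $\Q(\sqrt{m})$ has class number $1$, so the condition reduces to $Q=2$, which holds in each case and yields $m\in\{2,3,7,11,19,43,67,163\}$. If $h_{\Q(\sqrt{-m})}=2$ the condition forces $Q=1$ and $h_{\Q(\sqrt{m})}=1$; running through the class-number-two values this leaves exactly $m\in\{5,13,37\}$, the remaining candidates being eliminated either because $\Q(\sqrt{m})$ itself has class number $>1$ (e.g.\ $m=15,35$) or because the unit index turns out to be $Q=2$ rather than $1$ (e.g.\ $m=6,22$). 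Together the two cases produce precisely $\{2,3,5,7,11,13,19,37,43,67,163\}$.

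The crux---and the one genuinely delicate point---is the exact evaluation of the unit index $Q$, since it is $Q$ alone that distinguishes $m=5$ (where $Q=1$) from $m=6$ (where $Q=2$). I would settle it with the standard criterion for imaginary biquadratic fields: writing $\varepsilon$ for the fundamental unit of $\Q(\sqrt{m})$, one has $Q=2$ precisely when some root-of-unity multiple of $\varepsilon$ is a square in $K$ (so that $\cO_K^\times$ strictly enlarges the product of the subfield unit groups). This demands particular care in the two exceptional fields $K=\Q(\zeta_8)$ (the case $m=2$) and $K=\Q(\zeta_{12})$ (the case $m=3$), where $K$ contains more roots of unity than the $\langle i\rangle$ present generically, so that $W_K$ must be tracked explicitly. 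Carrying out these finitely many unit-index computations and feeding them, together with the tabulated quadratic class numbers, into the displayed equivalence $h_K=1\iff Q\,h_{\Q(\sqrt{m})}\,h_{\Q(\sqrt{-m})}=2$ completes the proof.
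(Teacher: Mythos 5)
The paper offers no proof of this proposition at all: it is quoted verbatim from the tables in Yamamura's determination of the imaginary abelian fields with class number one, so there is nothing internal to compare your argument against. What you have written is essentially the proof that exists in the literature (Brown--Parry and Buell--Williams--Williams for imaginary bicyclic biquadratic fields, later subsumed by Yamamura): Kuroda's class number formula $h_K=\tfrac{Q}{2}h_{\Q(i)}h_{\Q(\sqrt{m})}h_{\Q(\sqrt{-m})}$ with $Q=[\cO_K^\times:\cO_{\Q(i)}^\times\cO_{\Q(\sqrt{m})}^\times\cO_{\Q(\sqrt{-m})}^\times]\in\{1,2\}$, the reduction to $Q\,h_{\Q(\sqrt{m})}\,h_{\Q(\sqrt{-m})}=2$, the truncation to the finitely many $m$ with $h_{\Q(\sqrt{-m})}\le 2$, and a case analysis on the unit index. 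Your outline is correct, and you rightly identify $Q$ as the only delicate ingredient. Two small remarks on rigour. First, Brauer--Siegel is ineffective and by itself would not produce the explicit candidate list; what you actually need (and do cite) is the effective resolution of the class number one and two problems for imaginary quadratic fields, so the Brauer--Siegel sentence is a red herring. Second, in the branch $h_{\Q(\sqrt{-m})}=1$ you do not need to ``check directly'' that $Q=2$: there the formula reads $h_K=Q/2$, and integrality of $h_K$ forces $Q=2$ automatically, giving the eight values $\{2,3,7,11,19,43,67,163\}$ for free.

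What keeps this from being a complete proof is that the decisive finite verifications are asserted rather than carried out: the class numbers $h_{\Q(\sqrt{m})}$ for the candidate $m$, and above all the unit-index computations $Q=1$ for $m\in\{5,13,37\}$ versus $Q=2$ for $m\in\{6,22\}$, which is exactly where the distinction between the listed and unlisted values is decided (your own example $i(5+2\sqrt{6})=\bigl((1+i)(1+\tfrac{\sqrt{6}}{2})\bigr)^2$ shows how such a check goes). Since these are finitely many explicit computations with a standard criterion, this is a matter of completeness rather than a flaw in the approach. Finally, note that as literally stated the proposition fails for $m=1$ (where $K=\Q(i)$ has $h_K=1$ but $1$ is absent from the list); you discard $m=1$ correctly, and the paper implicitly assumes $\sqrt{m}\notin\Q(i)$ throughout.
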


\begin{remark}  Note that $\Q(i)(\sqrt{m})=\Q(i)(\sqrt{-m})$. Also note that $\Q(i)(\sqrt{2})=\Q(\zeta_8)$ and $\Q(i)(\sqrt{3})=\Q(\zeta_{12})$, where $\zeta_n$ denotes a primitive $n$-th root of unity.
\end{remark}

\begin{proposition}[Dirichlet's Unit Theorem {\cite[I(7.4)]{Neu}}]\label{propA8} 
Let $K$ be a number field with ring of integers $\cO_K$. Let $r$ be the number of real embeddings of $K$ and $s$ the number of pairs of complex conjugate embeddings of $K$. Let $\mu(K)$ denote the finite cyclic group of roots of unity that lie in $K$. The group of units of $\cO_K$  is the direct product of $\mu(K)$ and a free abelian group of rank $r+s-1$.
\end{proposition}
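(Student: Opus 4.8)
The statement is the classical Dirichlet Unit Theorem, and the plan is to prove it by means of the \emph{logarithmic embedding} together with Minkowski's geometry of numbers. Write $n=[K:\Q]=r+2s$, let $\sigma_1,\dots,\sigma_r\colon K\to\R$ be the real embeddings, and let $\sigma_{r+1},\dots,\sigma_{r+s}\colon K\to\C$ be one member from each conjugate pair of complex embeddings. First I would introduce the map
\[L\colon \cO_K^\times\to\R^{r+s},\quad L(u)=\bigl(\log|\sigma_1(u)|,\dots,\log|\sigma_r(u)|,\,2\log|\sigma_{r+1}(u)|,\dots,2\log|\sigma_{r+s}(u)|\bigr),\]
which is a group homomorphism from $(\cO_K^\times,\cdot)$ to $(\R^{r+s},+)$. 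The whole theorem then reduces to computing the kernel and the image of $L$.

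For the kernel, an element $u\in\cO_K^\times$ lies in $\ker L$ exactly when $|\sigma_i(u)|=1$ for every archimedean embedding. Such a $u$ is an algebraic integer all of whose conjugates have absolute value $1$; the set of all such integers is bounded in each coordinate and hence finite (it is the intersection of a lattice with a compact box). A finite subgroup of $K^\times$ is cyclic and consists of roots of unity, so $\ker L=\mu(K)$, recovering the finite cyclic factor.

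Next I would locate the image. Since $|N_{K/\Q}(u)|=1$ for every unit, taking logarithms shows that $L(\cO_K^\times)$ is contained in the trace-zero hyperplane $H=\{x\in\R^{r+s}:\sum_i x_i=0\}$, which has dimension $r+s-1$. Discreteness of the image follows from the same boundedness principle: a bound on $\|L(u)\|$ bounds all conjugates of $u$, leaving only finitely many possibilities. Hence $L(\cO_K^\times)$ is a discrete subgroup of $H$, i.e.\ a lattice of rank at most $r+s-1$, and the exact sequence $1\to\mu(K)\to\cO_K^\times\to L(\cO_K^\times)\to 0$ already yields $\cO_K^\times\cong\mu(K)\times\Z^{t}$ with $t\le r+s-1$.

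The hard part, and the only place where real work is needed, is showing that the rank is \emph{exactly} $r+s-1$, equivalently that $L(\cO_K^\times)$ spans $H$. Here I would invoke Minkowski's convex body theorem: there is a constant $C$ depending only on $K$ so that whenever positive reals $c_1,\dots,c_{r+s}$ satisfy $\prod_{i\le r}c_i\prod_{j\le s}c_{r+j}^2\ge C$, the box $\{|\sigma_i(\cdot)|\le c_i\}$ contains a nonzero $\alpha\in\cO_K$, which then automatically has $|N_{K/\Q}(\alpha)|\le C$. Fixing a coordinate $k$ and letting all the other bounds shrink while keeping the product fixed produces a sequence of integers of norm $\le C$; since there are only finitely many ideals of norm $\le C$, two of them generate the same principal ideal, and their quotient is a unit whose $\log|\sigma_i|$ is strictly negative for every $i\ne k$. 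Running this for each $k$ yields $r+s$ units whose image vectors form a matrix with positive diagonal, negative off-diagonal, and zero row sums; such a matrix has rank $r+s-1$, so the images span $H$. Combined with the previous step this shows that $L(\cO_K^\times)$ is a full lattice in $H$, hence free of rank $r+s-1$, and the splitting of the exact sequence delivers $\cO_K^\times\cong\mu(K)\times\Z^{r+s-1}$. The main obstacle is precisely this geometry-of-numbers step, where one must convert the existence of small-norm integers into the existence of units pointing in prescribed directions.
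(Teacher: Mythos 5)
The paper offers no proof of this proposition --- it is quoted as a background fact with a citation to Neukirch I(7.4) --- and your sketch is precisely the standard argument given there and in every textbook treatment: the logarithmic embedding $L$, the identification of $\ker L$ with $\mu(K)$ via Kronecker's finiteness argument, discreteness of the image in the trace-zero hyperplane, and the Minkowski-theoretic construction of units to force the rank up to $r+s-1$. The sketch is correct; the steps you leave implicit (finiteness of the set of ideals of bounded norm, and the lemma that an $m\times m$ matrix with negative off-diagonal entries and zero row sums has rank $m-1$) are standard and routine to fill in.
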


\section*{Acknowledgements} The authors wish to thank CIRM (Centro Internazionale per la Ricerca Matematica) and the Fondazione Bruno Kessler for financial support via the Research in Pairs programme and hospitality  during the period December 13--21, 2008. They warmly thank Augusto Micheletti for facilitating their visit.  The second author also wishes to thank  Sandro Mattarei and the University of Trento for their hospitality and support during the same period. The authors wish to thank 
Fr\'ed\'erique Oggier and
the referees for their constructive criticism of an earlier version of this paper.

\medskip

\medskip
 {\it E-mail address: }susanne.pumpluen@nottingham.ac.uk\\
 \indent{\it E-mail address: }thomas.unger@ucd.ie\\

\end{document}